\newtheorem{definition}{Definition}
\newtheorem{example}{Example}
\newtheorem{proposition}{Proposition}
\renewcommand{\leq}{\leqslant}
\renewcommand{\geq}{\geqslant}
\newcommand{\calP}{\mathcal{P}}
\newcommand{\supp}{\mbox{Supp}}
\newcommand{\kRF}{\tiny  k\mbox{\rm -RF}}
\newcommand{\OneRF}{\tiny  1\mbox{\rm -RF}}
\title{The $k$-Robinson-Foulds Dissimilarity Measures for Comparison of Labeled  Trees} 
\author{Elahe Khayatian$^{1}$, Gabriel Valiente$^{2}$, Louxin Zhang$^{1\ast}$\\
{$^{1}$Department of Mathematics, National University of Singapore,}\\
{Singapore 119076}\\
{$^{2}$Department of Computer Science,}\\
{Technical University of Catalonia, E-08034 Barcelona, Spain}\\
{$^\ast$Corresponding author:
{E-mail: matzlx@nus.edu.sg.}}
}
\date{}
\begin{document} 
\maketitle 
\keywords{Phylogenetic trees, mutation trees, labeled trees, Robinson-Foulds distance,  $k$-Robinson-Foulds dissimilarity}

\begin{abstract}

%Investigating the mutational history of tumor cells is important for understanding the underlying mechanisms of cancer and its evolution. Now that the evolution of tumor cells is modeled using labeled trees, researchers are motivated to propose different measures for the comparison of mutation trees and other labeled trees. While the Robinson-Foulds distance is widely used for the comparison of phylogenetic trees, it has weaknesses when it is applied to labeled trees. Here, $k$-Robinson-Foulds dissimilarity measures are introduced for labeled tree comparison.
{
Understanding the mutational history of tumor cells is a critical endeavor in unraveling the mechanisms underlying cancer. Since the modeling of tumor cell evolution employs labeled trees, researchers are motivated to develop different methods to assess and compare mutation trees and other labeled trees. While the Robinson-Foulds distance is a widely utilized metric for comparing phylogenetic trees, its applicability to labeled trees reveals certain limitations. This paper introduces the $k$-Robinson-Foulds dissimilarity measures, tailored to address the challenges of labeled tree comparison. The Robinson-Foulds distance is succinctly expressed as $n$-RF in the space of
labeled trees with $n$ nodes. Like the Robinson-Foulds distance, the $k$-Robinson-Foulds is a pseudometric for multiset-labeled trees and becomes a metric in the space of 1-labeled trees. By setting $k$ to a small value, the
$k$-Robinson-Foulds dissimilarity can capture analogous local regions in two labeled trees with different size or different labels.}
\end{abstract}

\section{Introduction}
\label{sec 1}

Trees in biology are a fundamental concept as they depict the evolutionary history of entities. These entities may consist of organisms, species, proteins, genes or genomes. Trees are also useful for healthcare analysis and medical diagnosis. Introducing different kinds of tree models has given rise to the question about how these models can be efficiently compared for evaluation. This question has led to defining a robust dissimilarity measure in the space of targeted trees. For example, mutation/clonal trees are introduced to model tumor evolution, in which nodes represent cellular populations and are labeled by the gene mutations carried by those populations~\citep{8Karpov,schwartz2017evolution}.  The progression of tumors varies among different patients; additionally, information about such variations is significant for cancer treatment. Therefore, dissimilarity measures for mutation trees have become a focus of recent research (\cite{4Dinardo,1John_Zhang2019,llabres.ea:2021,8Karpov}).

In prior studies on species trees, several measures have been introduced to compare two phylogenetic trees. Some examples of such distances are Robinson-Foulds distance (RF)~\citep{2RobinsonF}, Nearest-Neighbor Interchange (NNI)~\citep{li1996,10Robinson}, Quartet distance~\citep{11Estabrook}, and Path distance~\citep{3Steel,12williams1971}. Although these distances have been widely used for phylogenetic trees, they are defined based on the assumption that the involved trees have the same label sets. Moreover, only leaves of phylogenetic trees are labeled. Thus, these distances are not useful for comparing trees with different label sets or trees in which all the nodes are labeled.

\subsection{Related Work on Comparison of Labeled Trees}

To get around some limitations of the above-mentioned distances in the comparison of mutation trees, researchers have introduced new measures for mutation trees. Some of these measures are Common Ancestor Set distance (CASet)~\citep{4Dinardo}, Distinctly Inherited Set Comparison distance (DISC)~\citep{4Dinardo}, and Multi-Labeled Tree Dissimilarity measure (MLTD)~\citep{8Karpov}. %Even though these distances allow for comparing clonal trees efficiently, they are defined based on the assumption that no mutation can occur more than once during a tumor history, and  mutations will not get lost~\citep{4Dinardo,8Karpov}. Therefore, these distances have various limitations for comparing the trees used to model the tumor evolution where  mutations may occur multiple times and get lost after being gained.
{
While these distance measures enable efficient comparison of clonal trees, they are defined based on the assumption that  mutations cannot occur more than once and  mutations will not be lost in the course of tumor evolution. As a result, these metrics exhibit multiple limitations when applied to the comparison of trees used to model complex tumor evolution, wherein mutations may indeed occur multiple times and subsequently be lost.
}

In addition to the three measures mentioned above, a group of other dissimilarity measures have been introduced for the comparison of mutation trees, including Parent-Child Distance~\citep{9Govek} and Ancestor-Descendant Distance~\citep{9Govek}. These measures are metric only in the space of `1-mutation' trees, in which each node is labeled by exactly one mutation. These distances are again defined based on the assumption that mentioned above.

Additionally,  there are other measures for mutation trees, defined based on the generalization methods. In such methods, researchers aim to extend the definition of an existing distance, which mostly used to compare phylogenetic trees, to mutation trees. For example, the generalized  Nearest Neighbour Interchange (gNNI) (\cite{1John_Zhang2019}) is defined by some minor modifications of NNI, which was first defined for the comparison of phylogenetic trees. The  other example is the Path Distance (\cite{9Govek}) which was first defined for phylogenetic trees comparison. Although these measure are applicable to mutation trees, they are only well defined for mutation trees with the same label sets (\cite{1John_Zhang2019, 9Govek}).

Apart from the measures mentioned above, another recently proposed distance is the generalized RF distance (GRF)~\citep{llabres.ea:2020,llabres.ea:2021}. This measure allows for the comparison of phylogenetic trees, phylogenetic networks, mutation and clonal trees. An important point about this distance is that its value depends on the intersection between clusters or clones of trees. However, this intersection does not contribute to the RF distance. In fact, if two clusters or clones of two trees are different, their contribution to the RF distance is 1; otherwise, it is 0. Hence, the generalized RF distance has a better resolution than the RF distance. However, it is defined based on the assumption that two distinct nodes in each tree are labeled by two disjoint sets~\citep{llabres.ea:2020}.

There are some other generalizations of the RF distance, such as Bourque distance~\citep{1John_Zhang2019}. This distance is effective for comparing mutation trees whose nodes are labeled by non-empty sets and has linear time complexity. However, like the above distances, it does not allow for multiple occurrences of mutations during the tumor history~\citep{1John_Zhang2019}. Other generalization of the RF distance have also been proposed for gene trees~\citep{Lafond,briand2022linear}.

The above-mentioned measures are not able to quantify similarity or difference of some tree models. Two instances of such models are the Dollo~\citep{16Farris} and the Camin-Sokal model~\citep{17camin1965method}. The reason behind the inadequacy of the measures for these models is that it is possible for mutations to get lost after they are gained in the Dollo model, and a mutation can occur more than once during the tumor history in the Camin-Sokal model~\citep{llabres.ea:2020}. Hence, some measures are needed to mitigate the problem. To the best of our knowledge, Triplet-based Distance~\citep{13Ciccolella} is the only measure introduced so far to resolve the issue. The distance is useful for comparing mutation trees whose nodes are labeled by non-empty sets; additionally, it allows for multiple occurrences of mutations during the tumor history and losing a mutation after it is gained~\citep{13Ciccolella}. Thus, the measure is applicable to the broader family of trees in which two nodes of a tree may have non-disjoint sets of labels. Nevertheless, it is not able to compare those trees in which there is a node whose label has more than one copy of a mutation.

Although no tree model has been introduced so far that allows for more than one copy of a mutation in the label of a single node, current models can be extended to deal with copy number of mutations. For example, the constrained $k$-Dollo model~\citep{19Palash} takes the variant read count and the total read count of each mutation in each cell, derived from single-cell DNA sequencing data, as input; then, based on three thresholds for the variant read count, the total read count, and the variant allele frequency, it decides whether a mutation is present or absent in a cell or it is missing~\citep{19Palash}. Alternatively, the model can consider the exact frequency numbers to show the multiplicity of each mutation in each cell. This motivates us to introduce new distances 
%define and evaluate $k$-RF dissimilarity measures 
that can be used to compare pairs of labeled trees whose nodes are labeled by non-empty multisets.

\subsection{Our Contributions to Tree Comparison}

%The rest of the paper consists of six sections. Section~\ref{sec 2} introduces the basic concepts and notations. In Section~\ref{sec 3}, we define some dissimilarity measures for 1-labeled unrooted and rooted trees. In Section~\ref{sec_4}, we present several  mathematical properties of the $k$-RF measures. We also examine the frequency distribution of the pairwise $k$-RF scores in the space of 1-labeled unrooted and rooted trees. In Section~\ref{sec 6}, we extend definitions of the measures to multiset-labeled unrooted and rooted trees. In Section~\ref{sec_5}, we demonstrate that $k$-RF measures perform well when used for clustering trees. Finally, in Section~\ref{sec 7}, a brief summary of the paper's aim, the important results, and open lines for future work are mentioned.

In this paper, we present $k$-RF dissimilarity measures designed for the comparison of labeled trees. They are first defined for 1-labeled trees (Section~\ref{sec 3}). Subsequently, we extend these measures to multiset-labeled trees (Section~\ref{sec 6}). We delve into the mathematical properties of the $k$-RF measures in  Sections~\ref{sec_4} and \ref{sec 6}. In particular, $k$-RF is a metric for 1-labeled trees.  We also assess the validity of the $k$-RF measures through comparisons with CASet, DISC, and GRF (Section~\ref{sec 6}), and the evaluation of their performance in the context of tree clustering (Section~\ref{sec_5}).

\section{Concepts and Notations}
\label{sec 2}

A graph consists of a set of nodes and a set of edges that are each an unordered pair of distinct nodes, whereas a directed graph consists of a set of nodes and a set of directed edges that are each an ordered pair of distinct nodes.

Let $G$ be a (directed) graph. We use $V(G)$ and $E(G)$ to denote its node and edge set, respectively. 
If $G$ is undirected,  $(u, v)$ will still be used to denote an edge between $u$ and $v$ with the understanding that $(u, v)=(v, u)$.  Let $u, v \in V(G)$. 
 If  $(u, v) \in E(G)$, we say that $u$ and $v$ are adjacent,  the edge $(u, v)$ is incident to $u$ and $v$, or $u$ and $v$ are 
two endpoints of $(u, v)$.

The degree of $v$ is defined as the number of edges incident to $v$. 
In addition, if $G$ is directed, the \emph{indegree} and \emph{outdegree} of $v$ are defined as the number of edges $(x, y)$ such that $y = v$ and $x = v$, respectively. 
 The nodes of degree 1 are called the \emph{leaves} in an undirected graph, whereas the nodes of  indegree 1 and outdegree 0 are called the \emph{leaves} in a directed graph. 
We use $\mathit{Leaf}(G)$ to denote the leaf set for $G$. Non-leaf nodes are called \emph{internal nodes}. 

A \emph{path} of length $k$ from $u$ to $v$ consists of a sequence of nodes $u_0, u_1, \ldots, u_k$ such that $u_0 = u$, $u_k = v$ and $(u_{i-1}, u_{i}) \in E(G)$ for  $i=1, 2, \cdots, k$. 
 The \emph{distance} from $u$ to $v$, denoted as $d_G(u, v)$, is the length of the shortest paths from $u$ to $v$, 
and it is set to $\infty$ if there is no path from $u$ to $v$.
Note that if $G$ is undirected, $d_G(u, v) = d_G(v, u)$ for all $u, v \in V(G)$. The \emph{diameter} of $G$, denoted as $\textrm{diam}(G)$, is defined as $\max_{u, v \in V(G)} d_G(u, v)$. 
If $G$ is directed, 
its diameter is defined as  the diameter of its undirected version that has the node set $V(G)$ and edge set $ E(G)\cup \{(u, v) \mid (v, u)\in E(G) \}$.

\subsection{Trees}

A tree $T$ is a graph in which there is exactly one path from every node to any other node. It is \emph{binary} if every internal node is of degree 3. It is a \emph{line tree} if every internal node is of degree 2. Each line tree has exactly two leaves.

\subsection{Rooted Trees}

A rooted tree is a directed tree with a specific root node where the edges are oriented away from the root. In a rooted tree, there is exactly one edge entering $u$ for every non-root node $u$, and thus there is a unique path from its root to every other node.

Let $T$ be a rooted tree and $u, v \in V(T)$. If $(u, v) \in E(T)$, $v$ is called a child of $u$ and $u$ is called the parent of $v$. In general, for $u \neq v$, if $u$ belongs to the unique path from $\mathit{root}(T)$ to $v$, $v$ is said to be a descendant of $u$, and $u$ is said to be an ancestor of $v$. We use $C_T(u)$, $A_T(u)$ and $D_T(u)$ to denote the set of all children, ancestors and descendants of $u$, respectively. Note that $u \notin A_T(u)$ and $u \notin D_T(u)$.

%We say that $v \in V(T)$ is in level $n$ if the length of the path from the root of $T$ to $v$ is $n$. The depth of $T$, denoted as $\mathit{depth}(T)$, is the maximum level a node can have.

A binary rooted tree is a rooted tree in which the root is of indegree 0 and outdegree 2, and every other internal node is of indegree 1 and outdegree~2.
A rooted line tree is a rooted tree in which each internal node has only one child.
A rooted caterpillar tree is a rooted tree in which every internal node has at most one child that is internal.

\subsection{Labeled Trees}
\label{sec23}

Let $L$ be a set and $\mathbb{P}(L)$ be the set of all subsets of $L$. A tree or rooted tree $T$ is labeled with the subsets of $L$ if $T$ is equipped with a function $\ell : V(T) \to \mathbb{P}(L)$ such that $\cup_{v \in V(T)} \ell(v) = L$, and $\ell(v) \neq \emptyset$ for every $v \in V(T)$. In particular, if $\ell(v)$ contains exactly one element for each $v \in V(T)$ and $\ell$ is one-to-one, $T$ is said to be a $1$-labeled tree on $L$. 
In addition, if $T$ is 1-labeled on $L$, then for $C \subseteq V(T)$, $L(C)$ is defined as $L(C) = \{ x \in L \mid \exists w \in C : \ell(w) = \{x\} \}$.

%Two (rooted) labeled trees $S$ and $T$ are isomorphic, denoted by $S \cong T$, if there is a bijection $\phi : V(S) \to V(T)$ such that $\ell(\phi(v)) = \ell(v)$ for every $v \in V(S)$ and $(u,v) \in E(S)$ if and only if $(\phi(u), \phi(v)) \in E(T)$. Moreover, $\phi$ maps the root of $S$ to the root of $T$ if $S$ and $T$ are rooted.

\subsection{Phylogenetic and Mutation Trees}

Let $X$ be a finite taxon set. A phylogenetic tree (respectively, rooted phylogenetic tree) on $X$ is a binary tree (respectively, binary rooted tree) in which the leaves are uniquely labeled by the taxa of $X$ and the internal nodes are unlabeled.

A mutation tree on a set M of mutated genes is a rooted tree in which nodes are labeled with nonempty subsets of $M$.

\subsection{Dissimilarity Measures for Trees}

Let $\mathcal{T}$ be a set of trees. A dissimilarity measure on $\mathcal{T}$ is a symmetric real function $d : \mathcal{T} \times \mathcal{T} \to \mathbb{R}^{\geq 0}$. A dissimilarity measure should capture  the idea that the more similar two trees are, the lower their measure value is. A pseudometric on $\mathcal{T}$ is a dissimilarity measure that satisfies the triangle inequality condition. Finally, a metric (distance) on $\mathcal{T}$ is a pseudometric $d$ such that $d(S, T) \neq  0$ unless $S$ and $T$ are the same tree.

\section{The \texorpdfstring{$k$-RF}{k-RF} Measure for 1-Labeled Trees}
\label{sec 3}

In this section, we first recall the definition of the RF distance  and then present $k$-RF dissimilarity measures for 1-labeled trees for arbitrary $k$.

\subsection{The \texorpdfstring{$k$-RF}{k-RF} Measure for 1-Labeled Unrooted Trees}
\label{subsec 3.1}

Let $X$ be a set of labels and let $T$ be a 1-labeled tree over $X$. Each $e = (u, v) \in E(T)$ induces a pair of label subsets on $X$:
\begin{eqnarray}
P_{T}(e) = \left\{ L(B_e(u)), L(B_e(v)) \right\}, 
\label{eqn1_sec31} \\
%\end{equation}
%where
%\begin{equation}
B_e(u) = \{ w \mid d_T(w, u) < d_T(w, v) \}, 
\nonumber \\
B_e(v) = \{ w \mid d_T(w, v) < d_T(w, u) \}.
\label{eqn2_sec31}
\end{eqnarray}
We further define:
\begin{equation}
\calP(T) = \{ P_{T}(e) \mid e \in E(T) \}.
\label{eqn3_sec31}
\end{equation}
%
%Let $S$ and $T$ be two 1-labeled trees. 
The RF distance of two 1-labeled trees $S$ and $T$ is defined as:
\begin{equation}
d_{RF}(S, T) = \vert \calP(S) \bigtriangleup \calP(T) \vert.
\label{eqn4_sec31}
\end{equation}

\begin{figure}[!b]
\centering
\includegraphics[scale=0.6]{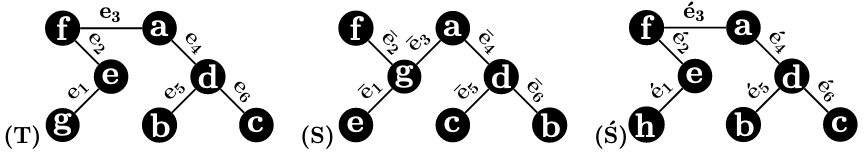}
\caption{\label{fig:1-1} 
Three 1-labeled trees in Example~\ref{example.1} to illustrate that the Robinson-Foulds 
distance exhibits a heavy penalty against trees with different labels. Although $T$ and $\acute{S}$ is only different
 in labelling one node, the RF distance is 4 for $S$ and $T$, but 12 for  $\acute{S}$ and $T$.
}
\end{figure}

\begin{example}
\label{example.1}
Consider the three 1-labeled trees in Figure~\ref{fig:1-1}. We have $d_{\mbox{\tiny RF}}(S, T) = 4$, because $P_T(e_1)$ to $ P_T(e_6)$ are:
%\begin{center}
\begin{eqnarray*}
   \left\{\{a, b, c, d, e, f\},\; \{g\}\right\},
& \left\{\{a, b, c, d, f\},\; \{e, g\}\right\},
& \left\{\{a, b, c, d\},\;\{e, f, g\}\right\},\\
   \left\{\{a, e, f, g\},\; \{b, c, d\}\right\}, 
& \left\{\{a, c, d, e, f, g\},\; \{b\}\right\},
& \left\{\{a, b, d, e, f, g\},\; \{c\}\right\},
\end{eqnarray*}
%\end{center}
respectively, whereas $P_S(\bar{e}_1)$ to $ P_S(\bar{e}_6)$ are:
%\begin{center}
\begin{eqnarray*}
   \left\{\{a, b, c, d, f, g\},\; \{e\}\right\},
& \left\{\{a, b, c, d, e, g\},\; \{f\}\right\},
& \left\{\{a, b, c, d\},\; \{e, f, g\}\right\},\\
   \left\{\{a, e, f, g\},\; \{b, c, d\}\right\},
& \{\{a, b, d, e, f, g\},\; \{c\}\},  
&\{\{a, c, d, e, f, g\},\; \{b\}\},
\end{eqnarray*}
%\end{center}
respectively. However, $d_{\mbox{\tiny RF}}(\acute{S}, T) = 12$, even if $T$ and $\acute{S}$ have the same topology and only one node is labeled differently.
\end{example}

The above example indicates that the RF cannot capture local similarity (and difference) for 1-labeled trees if they have different labels. One popular dissimilarity measure for sets is the Jaccard distance. It is obtained by dividing the size of the symmetric difference of two sets by the size of their union. Two 1-labeled trees are identical if and only if they have the same set of  edges. Therefore, we propose to use $\vert E(S) \bigtriangleup E(T) \vert$ and its generalization to measure the dissimilarity for 1-labeled trees $S$ and $T$.

Let $k$ be a non-negative integer and let $T$ be a $1$-labeled tree. Each edge $e = (u, v)$ induces the following pair of subsets of labels:
\begin{eqnarray}
&& P_{T}(e, k) = \{ L(B_e(u, k)), L(B_e(v, k)) \},
\label{eqn5_sec32} \\
&& B_e(x, k) = \{ w \in B_e(x) \mid d_T(w, x) \leq k \}, \;\; x=u, v. \nonumber
\end{eqnarray}
%where $B_e(x, k) = \{ w \in B_e(x) \mid d_T(w, x) \leq k \}$ for $x = u, v$. 
Clearly, $B_e(u, \infty) = B_e(u)$ and $B_e(u, 0) = \{ u \}$. We further define:
\begin{equation}
\calP_{k}(T) = \{ P_{T}(e, k) \mid e \in E(T) \}.
\label{eqn6_sec32}
\end{equation}

\begin{definition}
\label{k-RF-1}
Let $k \geq 0$ and let $S$ and $T$ be two 1-labeled trees. The $k$-RF dissimilarity score  of $S$ and $T$ is defined as:
\begin{equation}
d_{\kRF}(S, T) = |\calP_{k}(S) \bigtriangleup\calP_{k}(T)|.
\label{eqn7_sec32}
\end{equation}
\end{definition}

\begin{example}
Continuing with Example~\ref{example.1}, we have $d_{\OneRF}(\acute{S}, T) = 4$, as $P_{T}(e_i, 1)$ for $1 \leq i \leq 6$ are:
\begin{center}
\begin{tabular}{lllll}
   $\{\{g\}, \{ e, f\}\}$,
&& $\{\{e, g\}, \{a, f\}\}$,
&& $\{\{e, f\}, \{a, d\}\}$,\\
   $\{\{a, f\}, \{b, c, d\}\}$,
&& $\{\{b\}, \{a, c, d\}\}$,
&& $\{\{c\}, \{a, b, d\}\}$,
\end{tabular}
\end{center}
respectively, and $P_{\acute{S}}(\acute{e}_i, 1)$ for $1 \leq i \leq 6$ are:
\begin{center}
\begin{tabular}{lllll}
   $\{\{h\}, \{e, f\}\}$,
&& $\{\{e, h\}, \{a, f\}\}$,
&& $\{\{e, f\}, \{a, d\}\}$, \\
   $\{\{a, f\}, \{b, c, d\}\}$,
&& $\{\{b\}, \{a, c, d\}\}$,
&& $\{\{c\}, \{a, b, d\}\}$, 
\end{tabular}
\end{center}
respectively. We also have $d_{\OneRF}(S, T) = 8$. Thus,  $1$-RF  captures the difference of the trees better than the RF distance.
\end{example}

\subsection{The \texorpdfstring{$k$-RF}{k-RF} Measure for 1-Labeled Rooted Trees}
\label{subsec 3.4}

Let $k \geq 0$ be an integer and let $T$ be a $1$-labeled rooted tree. For a node $w \in V(T)$, we define $B_k(w)$ and $D_k(w)$ as:
\begin{eqnarray}
&& B_k(w) = \{ x \in V(T) \mid \exists y \in A_T(w) \cup \{ w \} : d(y, w) + d(y, x) \leq k \},
\label{def_Bk}\\
&& D_k(w) = \{ w \} \cup \{ x \in D_T(w) \mid d(w, x) \leq k \}.
\label{def_Dk} 
\end{eqnarray}

For each $e = (u, v) \in E(T)$, we define $P_{T}(e, k)$ as the following ordered pair od two label subsets:
\begin{equation}
P_{T}(e, k) = (L(D_{k}(v)), L(B_k(u) \setminus D_k(v))).
\label{eqn11_sec34}
\end{equation}
Here, the first subset of $P_T(e,k)$ contains the labels of the descendants that are at distance at most $k$ from $v$, whereas the second subset contains the labels of the other nodes around the edge $e$ within a distance of $k$. Next, we define:
{
\begin{equation}
\mathcal{P}_{k}(T) = \{ P_{T}(e, k) \mid e\in E(T) \}.
\label{eqn12_sec34}
\end{equation}
} 

\begin{definition}
\label{k-root-1}
Let $k\geqslant 0$ and let $S$ and $T$ be two 1-labeled rooted trees. Then, the $k$-RF dissimilarity between $S$ and $T$ is defined as:
\begin{equation}
d_{\kRF}(S, T) = |\mathcal{P}_{k}(S)  \bigtriangleup \mathcal{P}_{k}(T)|.
\label{eqn13_sec34}
\end{equation}
\end{definition}

\begin{example}
\label{example.2}
Consider the two 1-labeled rooted trees $S$ and $T$ in Figure~\ref{fig:amaa-1}. $P_T(e_i, 1)$ ($1 \leq i \leq 7$) are the following ordered pairs of label subsets:
$$
\begin{array}{lllllll}
   (\{f, h\}, \{b, d\}),
&& (\{c, f, g\}, \{ b, h\}),
&& (\{c\}, \{f, g, h\}),
&& (\{g\}, \{c, f, h\}), \\
   (\{a, d, e\}, \{b, h\}),
&& (\{a\}, \{b, d, e\}),
&& (\{e\}, \{a, b, d\}).
&& 
\end{array}
$$
 $P_S(\bar{e}_i, 1)$ ($1 \leq i \leq 7$) are the following ordered pairs of label subsets:
$$
\begin{array}{lllllll}
   (\{b, d\}, \{c, f\}),
&& (\{a, d, e\}, \{b, c\}),
&& (\{a\}, \{b, d, e\}),
&& (\{e\}, \{a, b, d\}),\\
   (\{f, g, h\}, \{b, c\}),
&& (\{g\}, \{c, f, h\}),
&& (\{h\}, \{c, f, g\}).
&&
\end{array}
$$
%\begin{itemize}
%\item $(\{b, d\}, \{c, f\})$
%\item $(\{a, d, e\}, \{b, c\})$
%\item $(\{a\}, \{b, d, e\})$
%\item $(\{e\}, \{a, b, d\})$
%\item $(\{f, g, h\}, \{b, c\})$
%\item $(\{g\}, \{c, f, h\})$
%\item $(\{h\}, \{c, f, g\})$
%\end{itemize}
Therefore, $d_{\OneRF}(S, T)=8$.
\end{example}

\begin{figure}[!t]
\centering
\includegraphics[scale=0.5]{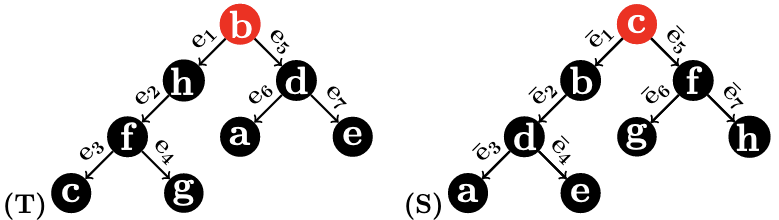}
\caption{\label{fig:amaa-1} Two 1-labeled rooted trees used to illustrate the  $1$-RF in Example~\ref{example.2}.}
\end{figure}
%Figure 2

\section{Characterization of \texorpdfstring{$k$-RF}{k-RF} for 1-Labeled Trees}
\label{sec_4}

In order to evaluate  $k$-RF, we first provide the mathematical properties of the $k$-RF.  We then present  experimental results on the frequency distribution of these measures.

\subsection{Mathematical Properties}

\begin{proposition}
\label{main-1}
Let $S$ and $T$ be two 1-labeled trees. 
%\begin{itemize}

 {\rm (a)}  Let $|L(S) \cap L(T)| \leq 2$ and $|E(T)| \geq 2$. For any $k \geq 1$, $d_{\kRF}(S, T) = |E(S)| + |E(T)|$.
 
{\rm (b)}  Assume that $L(S) \neq L(T)$. For $k < \min \{ \textrm{diam}(T), \textrm{diam}(S) \}$, \\
$k + 1 \leq d_{\kRF}(S, T) \leq |E(S)| + |E(T)|$. In addition, the second inequality become equality if $k \geq \min \{ \textrm{diam}(T), \textrm{diam}(S) \}$ and $|L(S)| = |L(T)|$
%, we have $d_{\tiny k-RF}(S, T) = |E(S)| + |E(T)|$. 

{\rm (c)} Renaming each node with its label, we have $d_{\tiny 0\mbox{\rm -RF}}(S, T) = \vert E(S) \bigtriangleup E(T) \vert$.

{\rm (d)}  If $k \geq \max \{ \textrm{diam}(S), \textrm{diam}(T) \} - 1$, then $d_{\kRF}(S, T) = d_{RF}(S, T)$.
%\end{itemize}
\end{proposition}

%\color{red}
\begin{proof}
(a) Note that if $k \geq 1$ and $|E(T)| \geq 2$,  each $P_{T}(e, k)$ involves at least three labels.  
If   $L(S)$ and $ L(T)$ have  only two common elements,   $P_{T}(e, k) \neq P_{S}(\acute{e}, k)$   for every $e \in E(T)$ and $ \acute{e} \in E(S)$. Thus, we have $\calP_{k}(S) \cap \calP_{k}(T) = \emptyset$, implying that 
$
d_{\kRF}(S, T) = \vert \calP_{k}(S)  \bigtriangleup \calP_{k}(T) \vert  =\vert \calP_{k} (T) \vert + \vert \calP_{k} (S) =
|E(S)| + |E(T)|.
$
%Thus, the statement in (a) is true.

(b)~  The second inequality follows from that 
$d_{\kRF}(S, T) = \vert \calP_{k}(S)  \bigtriangleup \calP_{k}(T) \vert 
\leq \vert \calP_{k} (T) \vert + \vert \calP_{k} (S) \vert $ and $\vert \calP_{k}(X)\vert=\vert E(X)\vert $ for $X=S, T$. We prove the first inequality as follows.

Let $k < \min \{ \mbox{diam}(T), \mbox{diam}(S) \}$. 
Since $S$ and $T$ are 1-labeled, we identify a node with its label in the trees. 
 Without loss of generality, 
 we may assume $v\in V(T)\setminus V(S)$. 
 %Since $v$ is connected to every other node in $T$ and
 %Since $k < \mbox{diam}(T)$, 
 %there exists 
 %at least one node $u$ such that
 %$d_T(v, u)=j$ for any $0\leq j\leq k$.
 Define 
 ${\cal N}^{(k)}_T(v) = \{ u \;\vert \; d_T(u, v)\leq k\}$.
 
 If ${\cal N}^{(k)}_T(v) =V(T)$, then, 
 $\vert {\cal N}^{(k)}_T(v) \vert 
 =\vert V(T) \vert \geq \mbox{diam}(T) +1\geq k+2$, as $k < \mbox{diam}(T)$. 
 This also implies that for every $(x, y)\in E(T)$, $d_T(v, x)\leq k$ and 
 $d_T(v, y)\leq k$. 
 
If ${\cal N}^{(k)}_T(v) \neq V(T)$, there exists at least a node $w$ that is  $k+1$ or more distance away from $v$. Since $T$ is connected,  we let $P(v, w)$ be a path from $v$ and $w$ with the smallest length. 
Clearly, the first $k+1$ nodes in $P(v, w)$ (including $v$) are all in ${\cal N}^{(k)}_T(v)$, i.e. at least one end of the first $k+1$ edges of $P(v, w)$ are found in 
${\cal N}^{(k)}_T(v)$.

 In summary, we have proved that there are at least $k+1$ edges $(x, y)$ such that either $d_T(v, x)\leq k$ or $d_T(v, y)\leq k$. For each of these edges $e$,  $v$ appears in at least one label subset of $P_T(e, k)$  and thus
 $P_T(e, k)\not\in \mathcal{P}_k(S)$. Therefore,
 $d_{\kRF}(S, T) \geq \vert  \mathcal{P}_k(T)\setminus  \mathcal{P}_k(S)\vert \geq k+1$.

%\color{blue} 
%($\exists \forall  \neq \forall\exists$. So proof of third paragraph of part (b) needs to be edited. Also, it can be more detailed as below.)

%(If ${\cal N}^{(k)}_T(v) \neq V(T)$, since $T$ is connected, there is at least one node $u\in V(T)\setminus {\cal N}^{(k)}_T(v)$ such that $d=d_T(v,u)\geq k+1$. Now, consider the path from $v$ to $u$ as $v=x_0, x_1,...,x_{k},...,x_{d}=u$. Clearly, for each $1\leq j\leq k$, $d_T(x_{j}, v)=j$ which implies that ${\cal N}^{(k)}_T(v)$ contains at least $k+1$ nodes. Now, for each $1\leq j\leq k+1$,  $(x_{j-1}, x_{j})\in E(T)$ and we have   $d_T(x_{j-1}, v)=j-1\leq k$ and $d_T(x_{j}, v)=j\leq k$.)
%\color{black}

If $|L(S)| = |L(T)|$ and $k \geq \min \{ \textrm{diam}(T), \textrm{diam}(S) \}$,
then, ${\cal N}^{(k)}_T(v) =V(T)$. 
Therefore,  
the induced pair $P_T(e, k)$ contains $v$ for every edge $e$ of $T$. On the other hand, 
the induced pair $P_S(e, k)$ does not contain $v$ for each edge $e$ of $S$. Thus, $\mathcal{P}_k(S)\cap \mathcal{P}_k(T)=\emptyset$ and $d_{\kRF}(S, T)=
\vert \mathcal{P}_k(S)\vert +\vert \mathcal{P}_k(T)\vert = \vert E(S)\vert + \vert E(T) \vert$.

% we assume that {\color{orange}  there exists} $e = (u, v) \in E(T)$ such that $\ell(v) \not\subseteq L(S)$. Obviously, $P_{T}(e, k)$ does not belong to $\calP_{k}(S)$. Thus, $d_{\kRF}(S, T) \geq 1$. Now, if $\textrm{diam}(T) > k \geq 1$, we have either $\deg(u) \geq 2$ or $\deg(v) \geq 2$. Thus, we can find another edge, namely $e_{1} = (u_{1}, u)$ such that $P_{T}(e_{1}, k)$ does not belong to $P_{k}(S)$. If $k = 1$, we have $k + 1 \leq d_{\kRF}(S, T)$. Otherwise, since $\textrm{diam}(T) > k \geq 2$, we have either $\deg(v) \geq 2$ or $\deg(u_{1}) \geq 2$ or $\deg(u) \geq 3$ and the degree of the third node adjacent to $u$ is at least 2. Hence, we can find another edge, namely $e_{2}$ such that $P_{T}(e_{2}, k)$ does not belong to $\calP_{k}(S)$. Continuing this process, we can find $e_{1}, \ldots, e_{k}$ such that $P_{T}(e_{1}, k), \ldots, P_{T}(e_{k}, k)$ do not belong to $\calP_{k}(S)$. Thus, we have $k + 1 \leq d_{\kRF}(S, T)$. 

(c)~  Note that we may represent each node of a 1-labeled tree with its unique label. As a result, $P_{T}(e, 0) = e$ and $P_{S}(\bar{e}, 0) = e$ for $e \in E(T)$ and $\bar{e} \in E(S)$. Thus, \textbf{(c)} follows.

(d)~ It follows from the definition of the $k$-RF.
\end{proof}

\begin{lemma}
\label{important-1}
Let $k \geq 0$ be an integer.  $k$-RF  satisfies the non-negativity, symmetry and triangle inequality conditions.
%It also satisfies that $d_{\kRF}(S, T) = 0$ if and only if $S \cong T$. In other words, $k$-RF is a metric in the space of 1-labeled trees.
\end{lemma}

\begin{proof}
Let $k\geq 0$.
The non-negativity and symmetry conditions are trivial.
%
%Let $T_{1}$, $T_{2}$, and $T_{3}$ be three 1-labeled trees. 
The triangle inequality
%\begin{eqnarray*}
$ d_{\kRF}(T_{1},T_{2}) \leq d_{\kRF}(T_{1},T_{3}) + d_{\kRF}(T_{3},T_{2})
$ 
%\end{eqnarray*}
is derived from the inequality
%
%Let $\calP_{k}(T_1)$, $\calP_{k}(T_2)$ and $\calP_{k}(T_3)$ be the three sets corresponding to $T_{1}, T_{2}, T_3$, respectively. The first inequality follows from the following relation:
%\[
$ \calP_{k}(T_1) \bigtriangleup \calP_{k}(T_2) \subseteq (\calP_{k}(T_1) \bigtriangleup \calP_{k}(T_3)) \cup (\calP_{k}(T_3) \bigtriangleup \calP_{k}(T_2))
$
for any three 1-labeled trees $T_{1}, T_{2}, T_3$.
%\]
%Also, the second and third inequalities follow from the definition of the $k$-RF measure.
\end{proof}

\begin{remark}
\label{rem-1}
Proposition~\ref{main-1} and Lemma~\ref{important-1} can be proved in the same manner for 1-labeled rooted trees.
\end{remark}

%\color{blue}
%\begin{proposition}\label{main-root-1}
%	Let S and T be two $ 1 $-labeled trees over the same label set $ L $. In addition, suppose that $t \in V(T)$ has the same label as the $\mathit{root}(S)$. Then, we have
	%\[ d_{k-RF}(S,T) \geqslant 2 \min\{d_T(\mathit{root}(T),t),k+1\}.\]
%\end{proposition}
%\begin{proof}
	%Let the path between  $\mathit{root}(T)$  and $ t $ be $\mathit{root}(T)=t_{0},t_{1},t_{2},\ldots,t_{d}=t$, where $d=d_T(\mathit{root}(T),t)$. Then, for each directed edge $e_{i}=(t_{i},t_{i+1})$,  $\ell(\mathit{root}(S))=\ell(t)$ is a subset of the first term of $P_{T}(e_{i},k)$, where $ i=\max\{d-k,0\},\ldots,d-1$. On the other hand, for each $P_{S}(e_{j},k) \in \mathcal P_{k}(S)$, $\ell(\mathit{root}(S))$ is not a subset of the first term of $P_{S}(e_{j},k)$. Thus, $P_{T}(e_{i},k)$ for $ i=\max\{d-k,  0\},\ldots,d-1$, does not belong to $\mathcal P_{k}(S)$. In addition, if $d \leqslant k$, the number of such pairs is equal to $d$ and if $k+1 \leqslant d$, the number is equal to $k+1$. Finally, the proposition follows from the fact that $S$  and $T$ have the same label sets.
%\end{proof}
%\color{black}
%
%
%
%

\begin{proposition}
\label{0-metric-root}
The  0-RF is a metric on the space of all 1-labeled rooted trees. 
%In other words, besides satisfying non-negativity, symmetry and triangle inequality conditions, it is zero for two trees if and only if the trees are isomorphic.
\end{proposition}

%\color{red}
\begin{proof}
Let $S$ and $T$ be two 1-labeled rooted trees. By Remark~\ref{rem-1}, it is enough to show that  $S$ and $T$ are identical if $d_{\tiny 0\mbox{\rm -RF}}(S,T) = 0$.
By identifying a node with its label in $S$ and $T$, we obtain that
 $\mathcal P_{0}(S)=E(S)$ and $\mathcal P_{0}(T)=E(T)$. If   
 $d_{\tiny 0\mbox{\rm -RF}}(S,T) = 0$,  
$\vert E(T) \bigtriangleup E(S)\vert =0$ and thus $E(T)=E(S)$, i.e. $S$ and $T$ are identical. 
\end{proof}
%\color{black}

\begin{lemma} 
Let $T$ be a 1-labeled rooted tree with at least two nodes and  $\mathcal L$ be a subset of $\mathit{Leaf}(T)$.  Define $T'$ to be the subtree obtained by the removal of all the leaves of ${\mathcal L}$.
 Then, for any $k$, 
\begin{eqnarray} {\mathcal P}_{k}(T') = 
\{ (X\setminus \mathcal{L}, Y\setminus \mathcal{L})\; \;\vert\; (X, Y)\in {\mathcal P}_{k}(T) \; \& \; X \cap \mathcal{L}\neq \emptyset\}.
\label{eqn13_new}
\end{eqnarray}
\end{lemma} 
\begin{proof}
Since $T$ is 1-labeled, we identify a node of $T$ with its label in the following discussion. With this convention, for any subset $S$ of nodes, $L(S)=S$. 
    
    %Let $T$ be a 1-labeled rooted tree with at least two nodes and ${\mathcal L}$ be a subset of leaves of $T$.  Defined $T'$ to be the subtree obtained by removing all the leaves of ${\mathcal L}$.
    Let $\bar{E}(T)$ denote the subset of edges incident to a leaf of $\mathcal{L}$, i.e., 
    $\bar{E}(T)=\{(x, y) \in E(T) \;\vert\; y\in \mathcal{L}\}$. Then, 
    $$V(T)= V(T')\uplus \mathcal{L},\;\;\ 
    E(T)=E(T')\uplus \bar{E}(T).
    $$

 If $(u, v)\in \bar{E}(T)$, 
   $v\in \mathcal{L}\subseteq \mathit{Leaf}(T)$ and thus $ D_k(v)=\{v\}\subseteq \mathcal{L}.$

   For an edge $e=(u, v)\in E(T')$, 
   $P_T(e, k)=(D_k(v), B_k(u)\setminus D_k(v))$.
   By Eqn.~(\ref{def_Bk}) and ~(\ref{def_Dk}, 
   \begin{eqnarray*}
      D_k(v) &=& D_k(v)\cap V(T') \uplus D_k(v)\cap \mathcal{L}, \\
      B_k(u)\setminus D_k(v)
        &=& 
        [(B_k(v)\setminus D_k(v)) \cap V(T')] \uplus [(B_k(v)\setminus D_k(v))\cap \mathcal{L}]\\
        &=& [(B_k(v)\cap V(T')]\setminus [D_k(v)\cap V(T')] 
          \uplus (B_k(v)\setminus D_k(v))\cap \mathcal{L}
   \end{eqnarray*}
  If $(u, v)\in E(T')$,  $  D_k(v)\setminus \mathcal{L}
    = D_k(v)\cap V(T') \neq \emptyset$
    and \\
 $(B_k(v)\setminus D_k(v))\setminus \mathcal{L}= (B_k(v)\cap V(T')]\setminus [D_k(v)\cap V(T')].$
    Therefore, \\
    $\left(D_k(v)\setminus \mathcal{L}, 
    (B_k(v)\setminus D_k(v))\setminus \mathcal{L}\right) = P_{T'}(e, k).$
    
  This has proved Eqn.~(\ref{eqn13_new}).
\end{proof}

\begin{proposition}
\label{important1}
Let $k \geq 1$ be an integer.  The $k$-RF  is a metric in the space of all 1-labeled rooted trees.
\end{proposition}

\begin{proof}
Let $S$ and $T$ be two 1-labeled rooted trees. By  Remark~\ref{rem-1}, it is enough to show that  $d_{\kRF}(S, T) = 0$ (equivalently, $\mathcal P_{k}(T) = \mathcal P_{k}(S)$) implies that  $S$ and $T$ are identical. 
To this end, we prove that $E(T)$ can be uniquely determined by $\mathcal P_{k}(T)$ using mathematical induction. 

Since $\vert E(T)\vert =\vert {\mathcal P}_{k}(T)\vert$, $T$ is a single node if and only if 
$E(T)$ is empty if and only ${\mathcal P}_{k}(T)$ is empty. Therefore, the single-node tree is uniquely determined  by ${\mathcal P}_{k}(T)$. 

 Assume  $S$ is uniquely determined by  ${\mathcal P}_{k}(S)$ for arbitrary 1-labeled tree $S$ such that $\vert V(S)\vert < k$. Consider a 1-labeled tree $T$ such that $\vert V(S)\vert = k$. 

For a leaf $v\in \mathit{Leaf}(T)$, there is a unique edge 
 $e=(u, v)$ entering $v$. Note that $k\geq 1$. Since 
 $D_k(v)=\{v\}$ if and only if $v$ is a leaf, we can identify $v$ from $P_T(e, k)=(P_1, P_2)\in {\mathcal P}_{k}(T)$  such that $P_1=\{v\}.$  

 For $v\in V(T)\setminus \mathit{Leaf}(T)$, there is  a unique edge $e=(u, v)$ entering $v$. Since $k\geq 1$, 
 the children of $v$ are all a leaf if and only if 
 $D_k(v)=\{v\} \cup C_T(u)$ if and only if 
 $D_K(v)\setminus \mathit{Leaf}(T)=\{v\}$. Therefore, we can identify $v$ whose children are all leaves from the ordered pairs $(P_1, P_2) \in {\mathcal P}_{k}(T)$ such that $ P_1\setminus \mathit{Leaf}(T)$  contains only $v$. 
 
 Let $V'$ be the set of all nodes whose children are just leaves and  $D_T(V')=\cup_{x\in V'}C_T(x)$.
 Since $V'$ is nonempty, $D_T(V')\neq \emptyset$ . 
 Define 
 $E'(T)=\{(x, y)\in E(T) \;\;\vert\;\; x\in V' , y\in D_T(V')\}$. 

 For the tree $T'$ obtained from $T$ by the removal of the leaves of $D_T(V')$,
$\vert V(T') \vert =\vert V(S)\vert -\vert D_T(V') \vert < k.$
By Eqn.~(\ref{eqn13_new}), ${\mathcal P}_{k}(T')$ can be efficiently constructed from  ${\mathcal P}_{k}(T)$.
By the induction hypothesis, $E(T')$ is uniquely determined by ${\mathcal P}_{k}(T')$. 
As a result, $E(T)=E(T')\cup E'(T)$ is determined.

This concludes the proof.
\end{proof}

\begin{corollary}
Let $k \geqslant 0$. The $k$-RF is  a metric in the space of all $1$-labeled trees.
\end{corollary}

\begin{proof}
If $k = 0$, the statement follows from the same proof as for Proposition~\ref{0-metric-root}. Now, let $S$ and $T$ be two 1-labeled trees and $k \geq 1$. By Lemma~\ref{important-1}, it is enough to show that if $d_{\kRF}(S,T) = 0$ (equivalently, $\mathcal P_{k}(T)= \mathcal P_{k}(S)$), then $S$ and  $T$. This can be proved in a manner similar to  Proposition~\ref{important1}.
%we can infer that there is a one-to-one correspondence between $\mathit{Leaf}(T)$ and $\mathit{Leaf}(S)$, where corresponding nodes have the same labels. Let $u \in \mathit{Leaf}(T)$ and $v \in \mathit{Leaf}(S)$ be such that $\ell(u) = \ell(v)$. Then, we root $T$ and $S$ at $u$ and $v$, respectively. We call the induced rooted trees $\acute{T}$ and $\acute{S}$.
%
%Now, we show that $\mathcal P_{k}(\acute{T}) = \mathcal P_{k}(\acute{S})$. Let $e_u \in E(T)$ and $e_v \in E(S)$ be the edges leaving $u$ and $v$, respectively. Since $\ell(u) = \ell(v)$ and $\mathcal P_{k}(T) = \mathcal P_{k}(S)$, we have $P_{\acute{T}}(e_u, k) = P_{\acute{S}}(e_v, k)$. Now, suppose that the endpoints of $e_{u}$ and $e_{v}$ are $u_1$ and $v_1$, respectively. Let $e_{u_1}$ leave $u_1$. Then, we can find one and only one edge, namely $e_{v_1}$ which leaves $v_1$ such that $P_{\acute{T}}(e_{u_1}, k) = P_{\acute{S}}(e_{v_1}, k)$ and $\ell(u_1) = \ell(v_1)$. Continuing this process, we have $\mathcal P_{k}(\acute{T}) = \mathcal P_{k}(\acute{S})$. Finally, the statement follows from Proposition~\ref{important1}.
\end{proof}

\begin{lemma}
\label{lemma_1time}
Let $k \geq 0$ and let $T$ be a 1-labeled rooted tree with $n$ nodes. All subsets $D_i(w) = \{ w \} \cup \{ x \in D_T(w) \mid d(w, x) \leq i \}$ and $L(D_i(w))$ for all nodes $w$ and $i\leq k$ can be computed in at most $2(k+1)n$ set operations.
\end{lemma}

\begin{proof}
Since $T$ is 1-labeled, we can identify a node of $T$ with its label. In this way, $D_i(w) =L(D_i(w))$ for all nodes $w$ and $i \leq k$. By ordering the $n$ labels, we represent each subset of labels (and each subset of nodes) as a $n$-bit 0-1 string, where the $i$-th bit is 1 if and only if the $i$-th label (node) is in the subset.

The statement is obvious in the case $k=0$, since $D_0(w) = \{ w \}$ and, clearly, all the $D_0(w)$ for $w \in V(T)$ can be computed in at most $2n$ set operations. We assume $k>0$ and prove the statement by induction as follows.

Assume that all the $D_{k-1}(w)$ for $w \in V(T)$ have been computed in at most $2kn$ set operations. Assume $w$ has $d_w$ children $u_1, u_2, \ldots, u_{d(w)}$. Then,
$$
D_k(w) = \{ w \} \cup \left( \cup_{i=1}^{d_w} D_{k-1}(u_i) \right)
$$
%and so
%$$
%L(D_k(w)) = \{ \ell(w) \} \cup \left( \cup_{i=1}^{d_w} L(D_{k-1}(u_i)) \right).
%$$
This implies that $D_k(w)$ for all $w$ can be computed from all $D_{k-1}(w)$
%
%all $L(D_k(w))$ can be computed from all $\ell(w)$ and $L(D_{k-1}(w))$ for $w\in V(T)$ 
using  $\sum_{v \in V(T)} (1 + d_w) = 2n - 1$ set operations. In total, we can compute all subsets $D_i(w)$ $(i\leq k$ and $w\in V(T)$) in at most $2n - 1 + 2kn \leq 2(k + 1)n$ set operations.
%Note that this proof implies a dynamic programming algorithm for computing the label subsets in $2(k+1)n = O(kn)$ set operations. 
\end{proof}

\begin{lemma}
\label{lemma_2time}
Let $k \geq 0$ and $T$ be a 1-labeled rooted tree with $n$ nodes. Using $L(D_i(w))$ for $w \in V(T), 0 \leq i \leq k$, we can compute $L(B_k(w))$ for all $w$ in $O(kn)$ set operations, where $B_k(w)$ is defined in Eqn.~(\ref{def_Bk}).
%$$
%B_k(w) = \{ x \in V(T) \mid \exists y \in A_T(w) \cup \{ w \} : d(y, w) + d(y, x) \leq k \}.
%$$
\end{lemma}

\begin{proof}
Since $T$ is a 1-labeled rooted tree, we identify a node with its label. In this way, we just need to show that $B_k(w)$ for all nodes $w$ can be computed in $O(kn)$ set operations. 

Let $r$ be the root of $T$. For any node $w \in V(T)$, let the unique path from $r$ to $w$ be
$$
w_0 = r, w_1, \ldots, w_t = w.
$$
Then, we have that
$$
B_k(w_{t}) = \cup^{\min(k, t)}_{i=0} D_{k-i}(w_{t-i}).
$$
Given the subsets $D_i(u)$ for all $i\leq k$ and $u\in V(T)$, the above formula implies that $B_k(w_{t})$ can be computed in at most $k$ set operations.
In total, we can compute all $B_k(w_{t})$ for all $w\in V(T)$
%Given all $L(D_i(w))$, $L(B_k(w))$ is a union of at most $k$ known label subsets and so can be computed in at most $k$ set operations for each node $w$. 
%In total, we can compute all $L(B_k(w))$
in $O(kn)$ set operations.
\end{proof}

\begin{proposition}
\label{time-1}
Let $S$ and $T$ be two 1-labeled trees with $n$ nodes and $k \geq 0$. Then, $d_{\kRF}(S, T)$ can be computed in $O(kn^2)$ time.
\end{proposition}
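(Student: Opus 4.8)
The plan is to exploit that each tree on $n$ nodes has only $n-1$ edges, hence $\calP_k(T)$ contains at most $n-1$ partitions; the difficulty is that a single part $A_e(u,k)$ can have $\Theta(n)$ labels (for instance in a star with $k=1$), so the total content of $\calP_k(T)$ is $\Theta(n^2)$ and cannot be written out within the target bound. The key idea is therefore never to materialise these sets, but to replace every label set by a \emph{fingerprint} that (i) coincides for equal sets and (ii) is computed by a level-by-level recurrence over the tree. Concretely, I would draw an independent value $z_x$ for each label $x\in L(S)\cup L(T)$ and fingerprint a node set $A$ by $h(A)=\sum_{w\in A}z_{\ell(w)}$. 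This map is additive and subtractive over disjoint unions, so equal label sets receive equal fingerprints, and, over a large enough integer domain, distinct sets receive distinct fingerprints with high probability.

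For a directed edge ``$u\to v$'' let $g(u\to v,r)$ denote the fingerprint of the ball of radius $r$ around $v$ inside the component of $v$ obtained by deleting the edge $\{u,v\}$. Since $A_e(u,k)$ is exactly the set of $u$-side nodes within distance $k$ of $u$, we have $h(A_e(u,k))=g(v\to u,k)$ and $h(A_e(v,k))=g(u\to v,k)$, so the fingerprint of the partition $P_T(e,k)$ is simply the unordered pair $\{g(v\to u,k),\,g(u\to v,k)\}$. First I would compute all $g(u\to v,r)$ for $r=0,\dots,k$ in increasing order of $r$, using $g(u\to v,0)=z_{\ell(v)}$ and, for $r\ge 1$, $g(u\to v,r)=z_{\ell(v)}+\sum_{w} g(v\to w,r-1)$, where $w$ ranges over the neighbours $N(v)$ of $v$ other than $u$.

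A naive evaluation of this recurrence is too slow at high-degree nodes, and this is the crucial point of the argument. The remedy is to precompute, for each node $v$ and radius, the shared sum $\Sigma_v(r-1)=\sum_{w\in N(v)} g(v\to w,r-1)$ over \emph{all} neighbours, after which $g(u\to v,r)=z_{\ell(v)}+\Sigma_v(r-1)-g(v\to u,r-1)$ is obtained in $O(1)$ per directed edge. Each radius then costs $O(\sum_v \deg(v))=O(n)$, and all fingerprints are available after $O(kn)$ time. Forming the $n-1$ partition fingerprints of each tree and evaluating $|\calP_k(S)\triangle\calP_k(T)|$ reduces to comparing two collections of $O(n)$ integer pairs, which I would do by radix sort (or a hash table) in $O(n)$ time, keeping the total at $O(kn)$. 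For the rooted case the same engine applies: the downward-ball fingerprints obey the bottom-up recurrence $h(D_r(v))=z_{\ell(v)}+\sum_{c\in C_T(v)} h(D_{r-1}(c))$, computable for all $v$ and all $r\le k$ in $O(kn)$; since $B_k(u)$ is just the undirected ball of radius $k$ around $u$ it is recovered from the directed quantities above, and because $D_k(v)\cap B_k(u)=D_{k-1}(v)$ when $v$ is a child of $u$, subtractivity gives $h(B_k(u)\setminus D_k(v))=h(B_k(u))-h(D_{k-1}(v))$, so the ordered-pair fingerprint of $P_T(e,k)$ follows and the symmetric difference is counted as before.

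I expect the main obstacle to be exactly this efficiency at high-degree nodes: the partitions themselves are too large to touch directly, and the $O(kn)$ bound hinges on the additivity of $h$ together with the shared-sum trick, which lets every directed edge reuse a single precomputed sum per node and radius. A secondary point needing care is ensuring that distinct label sets receive distinct fingerprints; this is where a random (or otherwise collision-resistant) choice of the $z_x$, controlled by a union bound over the $O(n^2)$ candidate pairs, is required to make the final comparison correct.
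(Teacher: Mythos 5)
Your proposal is correct, but it reaches the $O(kn)$ bound by a genuinely different route from the paper. The paper's proof (Lemmas~1 and~2 together with Proposition~3 in the Supplementary Document) materialises every label set explicitly as an $n$-bit string, computes all $L(D_i(w))$ bottom-up and all $L(B_k(w))$ as unions of at most $k$ of these, and then radix-sorts the resulting $2n$-bit partition encodings; the unrooted case is reduced to the rooted one by rooting at a leaf and canonically reordering each unordered pair. Crucially, the paper's $O(kn)$ is counted in \emph{set operations}, each of which acts on $\Theta(n)$-bit strings, so in the word-RAM model its cost is really $O(kn)$ operations on objects of size $\Theta(n)$. You instead refuse to materialise the sets at all and replace each label set by an additive fingerprint $h(A)=\sum_{w\in A}z_{\ell(w)}$, propagated by the standard rerooting recurrence with the shared-sum correction $g(u\to v,r)=z_{\ell(v)}+\Sigma_v(r-1)-g(v\to u,r-1)$ to stay linear at high-degree nodes; your identities $h(A_e(u,k))=g(v\to u,k)$ and, in the rooted case, $B_k(u)\cap D_k(v)=D_{k-1}(v)$ (so that $h(B_k(u)\setminus D_k(v))=h(B_k(u))-h(D_{k-1}(v))$) are all correct since $B_k(u)$ is indeed the undirected ball of radius $k$ around $u$ and $\ell$ is injective on a $1$-labeled tree. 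What each approach buys: the paper's algorithm is deterministic and always exact, but honest only in the unit-cost set-operation model; yours achieves a true $O(kn)$ word-operation bound with the degree issue handled explicitly, at the price of being Monte Carlo --- the answer is correct only with high probability over the choice of the $z_x$, controlled by a union bound over the $O(n^2)$ pairs of distinct sets. If a deterministic guarantee is required to match the proposition as literally stated, you would either fall back on explicit set representations (recovering the paper's proof) or argue that a suitable field size makes collisions impossible for the specific sums that arise, which your write-up does not do.
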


\begin{proof}
We first consider the rooted tree case. Let $S$ and $T$ be two 1-labeled rooted trees with $n$ nodes. Without loss of generality, we may assume that $S$ and $T$ are labeled on the same set $L$, with $|L| = n$. (Otherwise, we can consider them labeled on $L=L(S) \cup L(T)$, with $n \leq |L| \leq 2n$.) By Lemma~\ref{lemma_1time} and Lemma~\ref{lemma_2time}, we can compute $P_X(e, k)$ for all $e\in E(X)$ in $O(kn)$ set operations for $X = S, T$. Since each edge induces an ordered pair of label subsets and we represent each label subset using a $n$-bit string, we consider $P_X(e, k)$ as a $2n$-bit string. In this way, we sort all the edge-induced pairs of label subsets for each tree in $O(n^2)$ time by radix sort (that is, indexing) and then compute the symmetric difference of the two set of edge-induced pairs in $O(n^2)$ time. This concludes the proof.

In the unrooted case, we first root the trees at a leaf. In this way, we can compute all the edge-induced pairs of label subsets in the derived rooted trees in $O(kn^2)$ time. Since the edges induce unordered pairs of label subsets in the original trees, we rearrange the two label subsets obtained for an edge in such a way that the smallest label in the first subset is smaller than every label in the second one. After the rearrangement, we can radix-sort the edge-induced pairs and compute the $k$-RF score in $O(n^2)$ time.
\end{proof}

%\begin{figure}
%\centering
%\includegraphics[scale=0.4]{dist-3.png}
%\caption{\label{Frequency-1}
%Frequency distribution of $k$-RF scores for %the 1-labeled 3-node unrooted (top row) and %rooted (bottom row) trees.}
%\end{figure}

\begin{figure}
\centering
\includegraphics[scale=0.5]{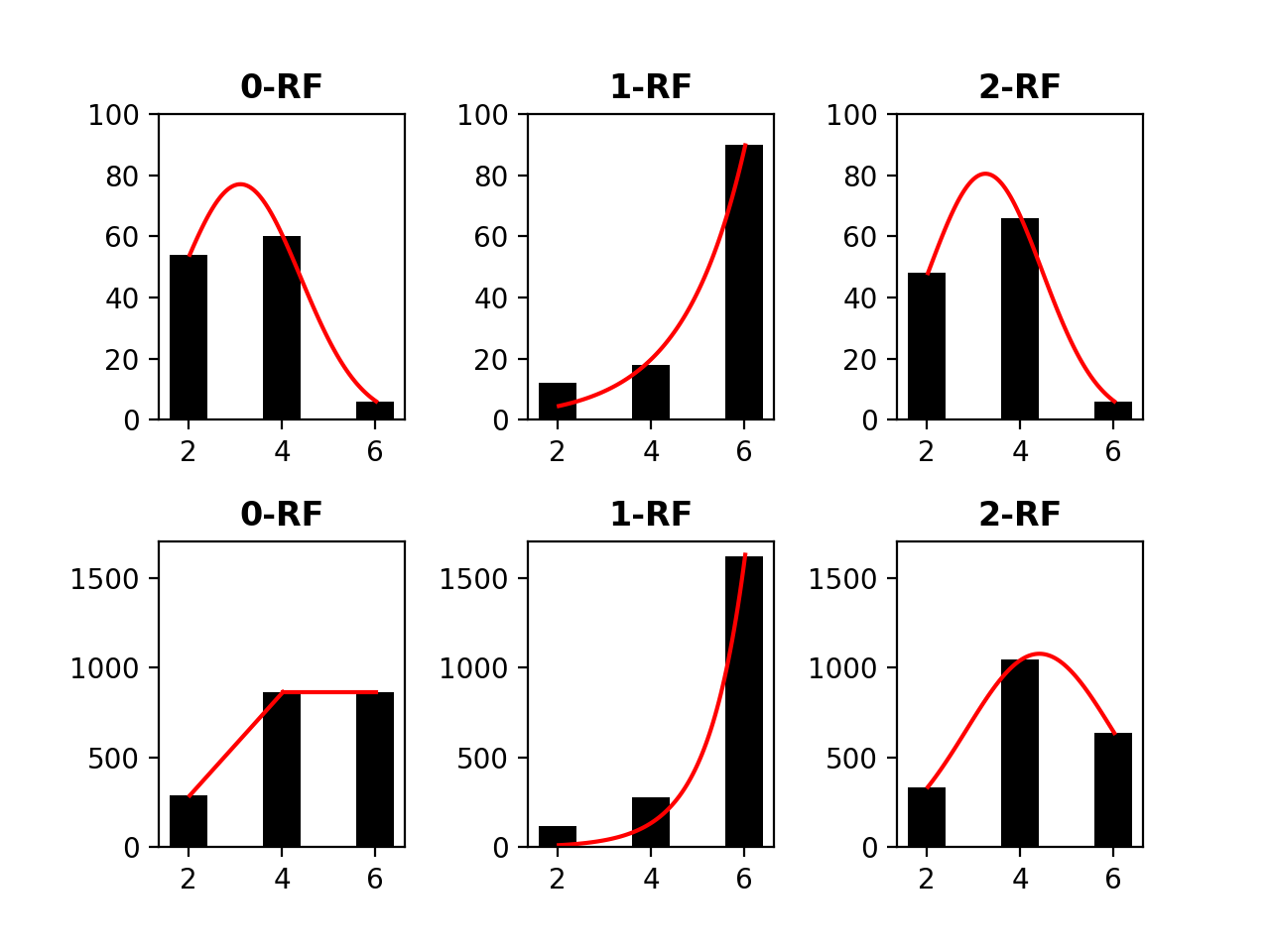}
\caption{\label{Frequency-2}
The frequency distributions of all pairwise $k$-Robinson-Foulds (RF)  scores in the space of 1-labeled unrooted (top row) and rooted (bottom row) 4-node trees for $k=0, 1, 2$.
In the bar-charts, the $x$-axis represents $k$-RF scores and the $y$-axis represents the number of tree pairs with a specific $k$-RF score.
}
\end{figure}
%Figure3

\begin{figure}[!t]
\centering
\includegraphics[width=\linewidth]{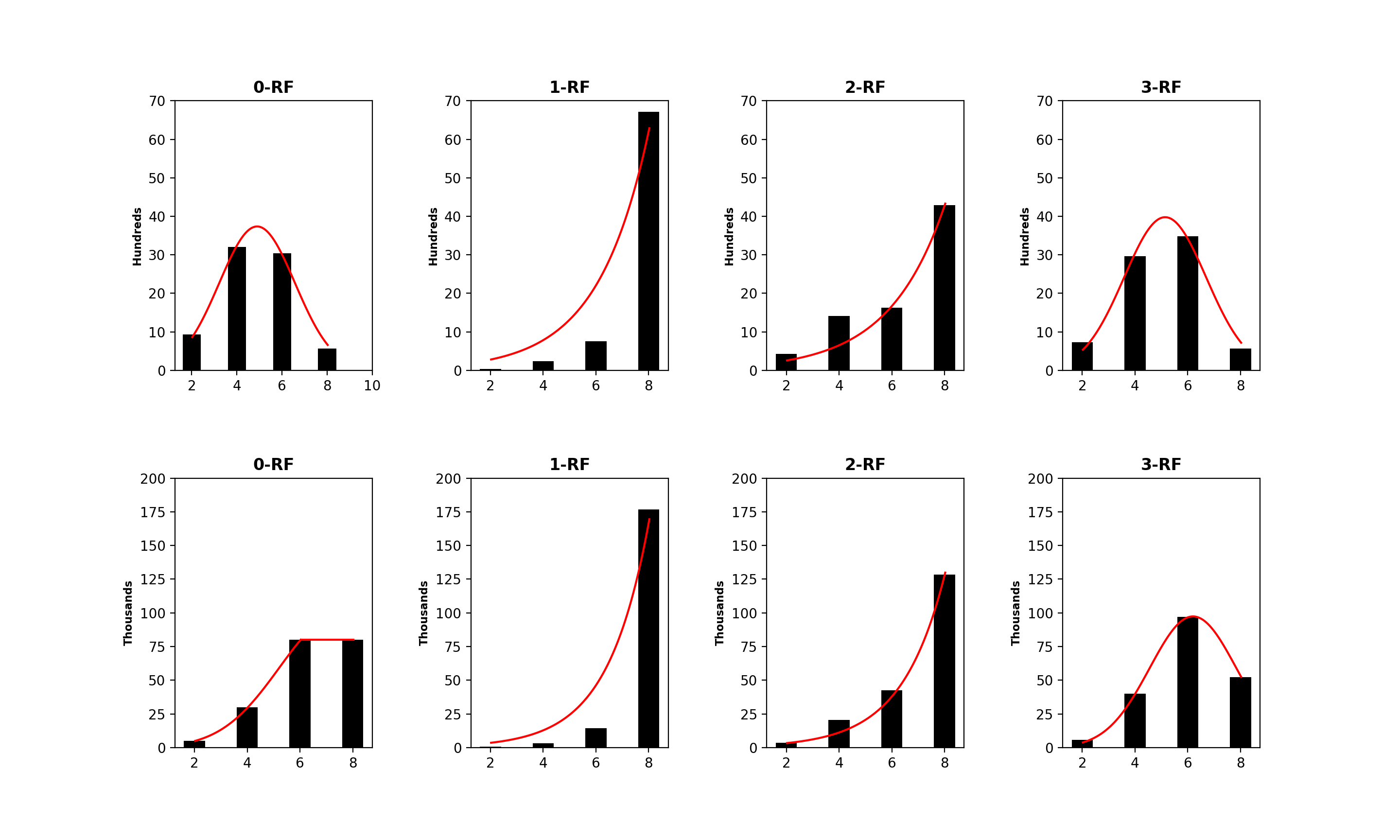}
\caption{\label{Frequency-3}
The frequency distributions of all pairwise $k$-Robinson-Foulds (RF) scores in the space of 1-labeled unrooted (top row) and rooted (bottom row) 5-node trees, where $k\leq 3$.
In the bar-charts, the $x$-axis represents $k$-RF scores and the $y$-axis represents the number of tree pairs with a specific $k$-RF score.
}
\end{figure}
%Fig 4

\begin{figure}[!t]
\centering
\includegraphics[scale=0.2]{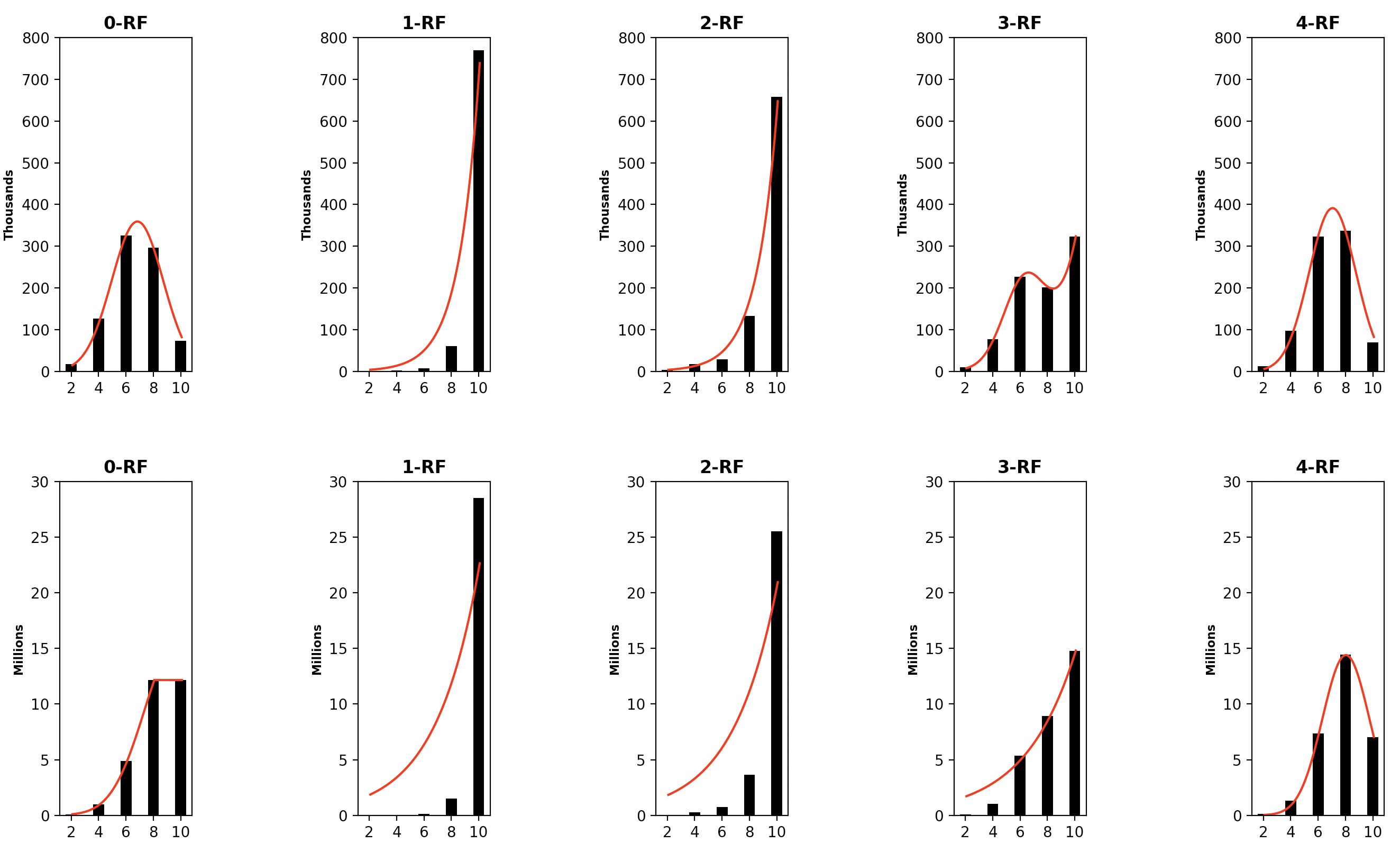}
\caption{\label{Frequency-4}
The frequency distributions of all pairwise $k$-Robinson-Foulds (RF) scores in the space of 1-labeled unrooted (top row) and rooted (bottom row) 6-node trees for $k\leq 4$.
In each bar-chart, the $x$-axis represents $k$-RF scores and the $y$-axis represents the number of tree pairs whose $k$-RF equals a given score.}
\end{figure}

\begin{figure}[!t]
\centering
\includegraphics[width=\linewidth]{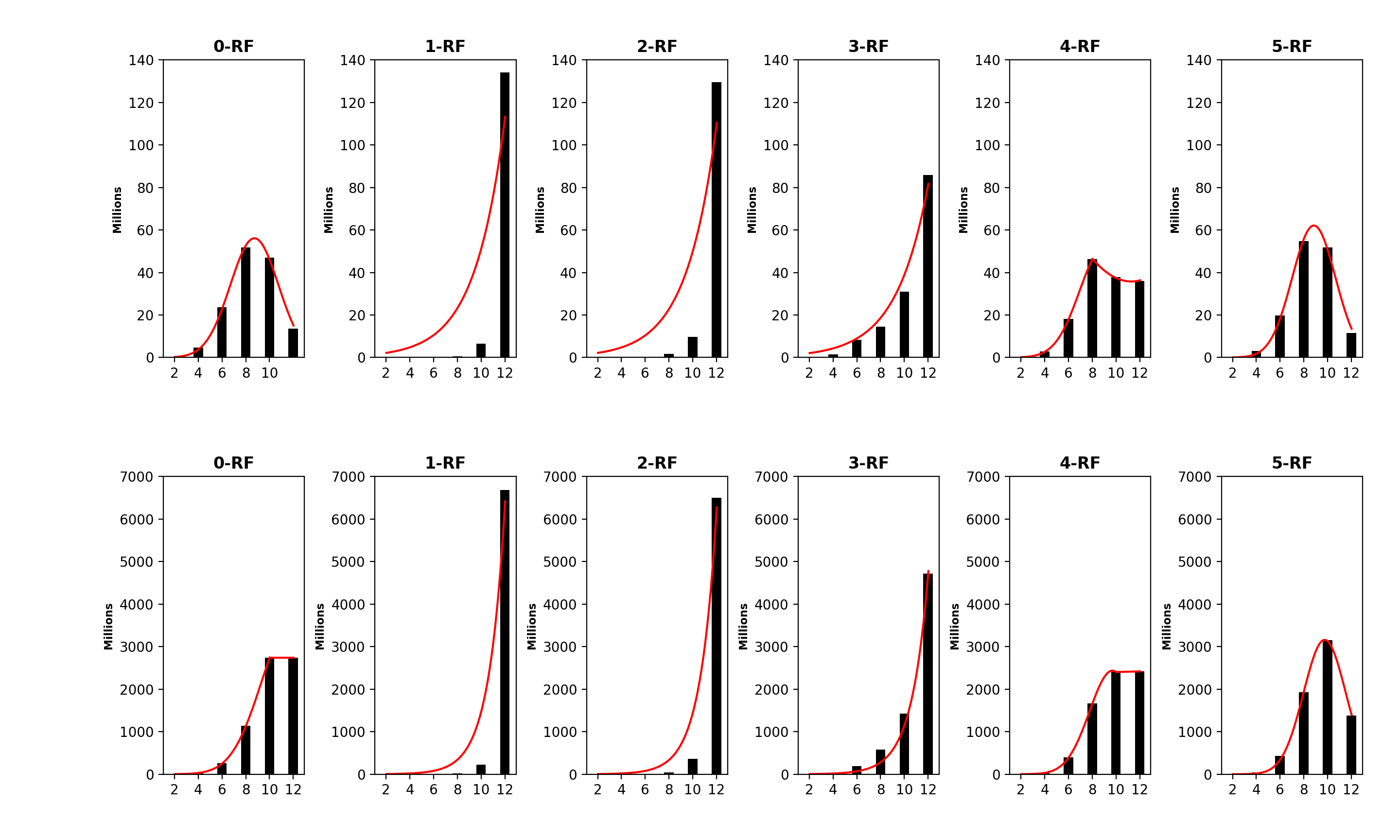}
\caption{\label{Frequency-5}
The frequency distributions of all pairwise $k$-Robinson-Foulds (RF) scores in the space of 1-labeled unrooted (top row) and rooted (bottom row) 7-node trees for $k\leq 5$.
In each bar-chart, the $x$-axis represents $k$-RF scores and the $y$-axis represents the number of tree pairs whose $k$-RF equals a given score.
%Frequency distribution of $k$-RF scores for the 1-labeled 7-node unrooted (top row) and rooted (bottom row) trees.
}
\end{figure}

\subsection{The Distribution of \texorpdfstring{$k$-RF}{k-RF} Scores}

We examined the distribution of the $k$-RF dissimilarity scores for 1-labeled unrooted and rooted trees with the same label set and with different label sets.

The distribution of the frequency of the pairwise $k$-RF scores in the space of $n$-node 1-labeled unrooted and rooted trees  for $n$ from 4 to 7 are presented in Figures~\ref{Frequency-2} to \ref{Frequency-5}, respectively. For each $n$, it suffices to consider $k=0,...,n-2$. Recall that $(n-2)$-RF is actually the RF distance. The frequency distribution for the RF distance in the space of phylogenetic trees is known to be Poisson~\citep{3Steel}. It seems also true that the pairwise 0-RF and $(n-2)$-RF scores have a Poisson distribution in the space of $n$-node 1-labeled unrooted and rooted trees. However, the distribution of the pairwise $k$-RF scores is unlikely
Poisson when $k = 1, 2, 3$ and $k\neq n-2$.

We examined 1,679,616 (respectively, 60,466,176) pairs of 6-node 1-labeled unrooted (respectively, rooted) trees such that the trees in each pair have $c$ common labels, with $c = 3, 4, 5$. Table~\ref{table0} shows that the majority of pairs have a largest dissimilarity score of 10.

\begin{table}[b!]
\caption{\label{table0} The number of pairs of 1-labeled 6-node unrooted (top) and rooted (bottom) trees that have $c$ labels in common and have 1-Robinson-Foulds (RF) score $d$ for $c = 3, 4, 5$ and $d = 2, 4, 6, 8, 10$.}
\begin{flushleft}
\begin{tabular}{c|rrrrr} \hline
1-RF & ~~2 & 4 & 6 & 8  & 10 \\ \hline
3&0&0&0&3,072&1,676,544\\ \hline
4&0&0&432&16,800&1,662,384\\ \hline
5&0& 340& 3,720&53,100&1,622,456\\ \hline
\end{tabular}
\end{flushleft}
\begin{flushright}
\begin{tabular}{c|rrrrr} \hline
1-RF & ~~2 & 4 & 6 & 8 & 10 \\ \hline
3&0 &0&0&79,872&60,386,304\\ \hline
4& 0&0&7,776& 419,136&60,039,264\\ \hline
5&0 &4,080& 65,760& 1,310,880&59,085,456\\ \hline
\end{tabular}
\end{flushright}
\end{table}

\section{A Generalization to Multiset-Labeled Trees}
\label{sec 6}

In this section, we extend the measures introduced in Section~\ref{sec 3} to multiset-labeled unrooted and rooted trees.

\subsection{Multisets and Their Operations}

A multiset is a collection of elements in which an element $x$ can occur one or more times~\citep{7Jurgensen}. The set of all distinct elements appearing in a multiset $A$ is denoted by $\supp(A)$. 
%Each multiset $A$ can be represented by $A = \left\{ x^{m_{A}(x)} \mid x \in \supp(A) \right\}$, where $m_{A}(x)$ is the number of occurrences of $x$. 
In this paper, we simply represent $A$ by the monomial $x_1^{m_A(x_1)} \ldots x_n^{m_A(x_n)}$ if $\supp(A)=\{x_1, x_2, \cdots, x_n\}$, where $x_i^{1}$ is simplified to $x_i$ for each $i$.

Let $A$ and $B$ be two multisets. We say $A$ is a sub-multiset of $B$, denoted by $A \subseteq_m B$, if for every $x \in \supp(A)$, $m_A(x) \leq m_B(x)$. In addition, we say that $A = B$ if $A \subseteq_m B$ and $B \subseteq_m A$. Furthermore, the union, sum, intersection, difference, and symmetric difference of $A$ and $B$ are respectively defined as follows:
\begin{itemize}
\item $A \cup_m B = \left\{ x^{\max\{m_A(x), m_B(x)\}} \mid x \in \supp(A) \cup \supp(B) \right\}$;
\item $A \uplus_m B = \left\{ x^{m_A(x)+ m_B(x)} \mid x \in \supp(A) \cup \supp(B) \right\}$; 
\item $A \cap_m B = \left\{ x^{\min\{m_A(x), m_B(x)\}} \mid x \in \supp(A) \cap \supp(B) \right\}$;
\item $A \setminus_m B = \left\{ x^{ m_A(x) - m_B(x)} \mid x \in \supp(A) : m_A(x) > m_B(x) \right\}$; 
\item $A \triangle_m B = (A \cup_m B) \setminus_m (A \cap_m B)$; 
\end{itemize}
where $m_X(x)$ is defined as 0 if $x \notin \supp(X)$ for $X = A, B$.

Let $L$ be a set and $\mathbb{P}_{m}(L)$ be the set of all sub-multisets on $L$. A tree $T$ is labeled with the sub-multisets of $L$ if $T$ is equipped with a function $\ell: V(T) \to \mathbb{P}_{m}(L)$ such that $\cup_{v \in V(T)} \supp(\ell(v)) = L$ and $\ell(v) \neq \emptyset$, for every $v \in V(T)$. We call such a tree as a multiset-labeled tree. For $C \subseteq V(T)$ and $x \in L$, we define $L_m(C)$ and $m_{T}(x)$ as follows:
\begin{eqnarray}
L_m(C) &=& \uplus_{v \in C} \ell(v); \label{set_label}\\
m_{T}(x) &=& \sum_{v \in V(T)} m_{\ell(v)}(x). 
\end{eqnarray}

\subsection{The \texorpdfstring{$k$-RF}{k-RF} for Multiset-Labeled Trees}
\label{subsec 5.1}

Let $T$ be a multiset-labeled tree. Then, each edge $e=(u, v)$ of $T$ induces a pair of multisets
\begin{equation}
P_{T}(e) = \left\{ L_m(B_e(u)), L_m(B_e(v)) \right\},
\label{multi_partition}
\end{equation}
where $L_m( )$ is defined in Eqn.~(\ref{set_label}), and $B_e(u)$  in Eqn.~(\ref{eqn2_sec31}). Note that Eqn.~(\ref{multi_partition}) is obtained from Eqn.~(\ref{eqn1_sec31}) by replacing $L()$ with $L_m()$.

\begin{remark}
In a multiset-labeled tree $T$, two edges may induce the same multi-set pair as shown in Figure~\ref{fig:aa}. Hence, $\calP(T)$ in Eqn.~(\ref{eqn3_sec31}) is a multiset in general.
\end{remark}

\begin{figure} %[ht!]
\centering
\includegraphics[scale=0.6]{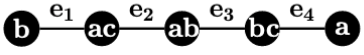}
\caption{\label{fig:aa} 
Two  multiset-labeled trees used to show that different edges can give the same label multi-subset pair. Here, $P_{T}(e_2) = P_{T}(e_3) = \{ abc, a^2b^2c \}$.}
\end{figure}

We use Eqn.~(\ref{multi_partition}), Eqn.~(\ref{eqn3_sec31}) and Eqn.~(\ref{eqn4_sec31}) to define the RF-distance for multiset-labeled trees by replacing $\triangle$ with $\triangle_m$ in Eqn.~(\ref{eqn4_sec31}).

Let $k \geq 0$. We use Eqn.~(\ref{eqn5_sec32}), Eqn.~(\ref{eqn6_sec32}), and Eqn.~(\ref{eqn7_sec32}) to define the $k$-RF  for multiset-labeled trees by replacing $L()$ with $L_m()$ in Eqn.~(\ref{eqn5_sec32}) and replacing $\triangle$ with $\triangle_m$ in Eqn.~(\ref{eqn7_sec32}).

\begin{example}
\label{example_4}
Consider the multiset-labeled trees $S$, $\acute{S}$, and $T$ in Figure~\ref{fig:1}. ${\mathcal P}_k(T), {\mathcal P}_k(S)$ and ${\mathcal P}_k(\acute{S})$ for $k = 0, 1, \infty$ are summarized in Table~\ref{table1}. We obtain:
$$
\begin{array}{lll}
  d_{0\mbox{\tiny -RF}}(T, \acute{S}) = 2;
& d_{\OneRF}(T, \acute{S}) = 6;
& d_{\mbox{\tiny RF}}(T, \acute{S}) = 12;\\ 
  d_{0\mbox{\tiny -RF}}(S, \acute{S}) = 10;
& d_{\OneRF}(S, \acute{S}) = 12;
& d_{\mbox{\tiny RF}}(S, \acute{S}) = 12.\\
\end{array}
$$
It is not hard to see that both  $d_{0\mbox{\tiny -RF}}(T, \acute{S})$ and $d_{\OneRF}(T, \acute{S})$ reflect the local similarity of the two multiset-labeled trees better than $d_{\mbox{\tiny RF}}(T, \acute{S})$.
\end{example}

\begin{figure}[t!]
\centering
\includegraphics[scale=0.5]{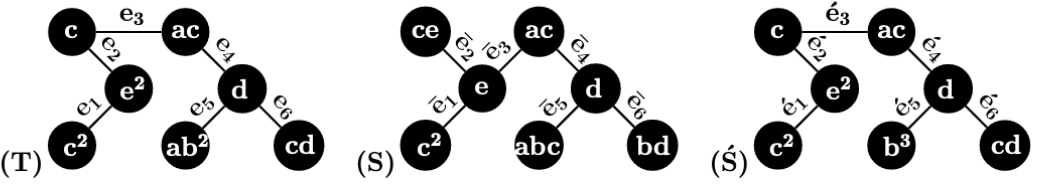}
\caption{\label{fig:1}Three multiset-labeled trees in Example~\ref{example_4}.}
\end{figure}

\begin{table}[b!]
\caption{\label{table1} Edge-induced unordered pairs of multisets in the three trees in Fig.~\ref{fig:1} for $k = 0, 1, \infty$.}
$$
\begin{array}{clll} \hline
\textrm{Tree} & \mathcal{P}_0(\;) & \mathcal{P}_1(\;) & \mathcal{P}_\infty(\;) \\ \hline
T & \{c^2, e^2\} & \{c^2, ce^2\} & \{a^2b^2c^3d^2e^2, c^2\} \\
& \{c, e^2\} & \{ab^2cd^2, ac^2\} & \{a^2b^2c^3d^2, c^2e^2\} \\
& \{ac, c\} & \{ac^2, c^2e^2\} & \{a^2b^2c^2d^2, c^3e^2\} \\
& \{ac, d\} & \{ab^2, ac^2d^2\} & \{ab^2cd^2, ac^4e^2\} \\
& \{ab^2, d\} & \{acd, ce^2\} & \{ab^2, ac^5d^2e^2\} \\
& \{cd, d\} & \{a^2b^2cd, cd\} & \{a^2b^2c^4de^2, cd\} \\ \hline
S & \{c^2, e\} & \{ac^2e^2, c^2\} & \{a^2b^2c^3d^2e^2, c^2\} \\
& \{ce, e\} & \{a^2bc^2d, bd\} & \{a^2b^2c^2d^2, c^3e^2\} \\
& \{ac, e\} & \{ab^2cd^2, ace\} & \{ab^2cd^2, ac^4e^2\} \\
& \{ac, d\} & \{ac^3e, ce\} & \{a^2b^2c^4d^2e, ce\} \\
& \{abc, d\} & \{acd, c^3e^2\} & \{abc, abc^4d^2e^2\} \\
& \{bd, d\} & \{abc, abcd^2\} & \{a^2bc^5de^2, bd\} \\ \hline
\acute{S} & \{c^2, e^2\} & \{c^2, ce^2 \}, & \{ab^3c^3d^2e^2, c^2\} \\
& \{c, e^2\} & \{ac^2, c^2e^2\} & \{ab^3c^3d^2, c^2e^2\} \\
& \{ac, c\} & \{acd, ce^2\} & \{ab^3c^2d^2, c^3e^2 \} \\
& \{ac, d\} & \{ac^2d^2, b^3\} & \{ac^4e^2, b^3cd^2 \} \\
& \{b^3, d\} & \{ac^2, b^3cd^2\} & \{ac^5e^2d^2, b^3\} \\
& \{cd, d\} & \{ab^3cd, cd\} & \{ab^3c^4e^2d, cd\} \\ \hline
\end{array}
$$
\end{table}

\subsection{The \texorpdfstring{$k$-RF}{k-RF} for Multiset-Labeled Rooted Trees}
\label{subsec 5.4}

Let $k \geq 0$ be an integer. We use Eqn.~(\ref{eqn11_sec34}), Eqn.~(\ref{eqn12_sec34}), and Eqn.~(\ref{eqn13_sec34}) to define $k$-RF for multiset-labeled rooted trees by replacing $L()$ with $L_m()$ in Eqn.~(\ref{eqn11_sec34}) and replacing $\triangle$ with $\triangle_m$ in Eqn.~(\ref{eqn13_sec34}).

%\subsection{Mathematical Properties of k-RF for Multiset-Labeled Trees}
%\label{subsec 5.5}

\begin{proposition}
\label{metric}
Let $k \geq 0$ be an integer. The $k$-RF  satisfies the non-negativity, symmetry, and triangle inequality conditions. Hence, $k$-RF is a pseudometric for each $k$ in the space of multiset-labeled (rooted) trees.
\end{proposition}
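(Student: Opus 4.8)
The plan is to treat $\calP_k(\cdot)$ as a multiset over the universe $\Pi$ of all possible edge-induced label pairs, and to reduce the three required conditions to elementary properties of the multiset symmetric difference. Non-negativity is immediate, since $d_{\kRF}(S,T)$ is by definition the cardinality of a multiset, hence a non-negative integer. Symmetry is equally immediate from the symmetry of $\triangle_m$: for any multisets $\mathcal{A}$ and $\mathcal{B}$ one has $\mathcal{A}\triangle_m\mathcal{B}=\mathcal{B}\triangle_m\mathcal{A}$, so $|\calP_k(S)\triangle_m\calP_k(T)|=|\calP_k(T)\triangle_m\calP_k(S)|$.

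The substance of the proposition is the triangle inequality, and the key step I would isolate is a purely combinatorial identity for the cardinality of a multiset symmetric difference. For any two multisets $\mathcal{A},\mathcal{B}$ over $\Pi$, write $m_{\mathcal{A}}(p)$ for the number of occurrences of $p\in\Pi$. From the definitions of $\mathcal{A}\cup_m\mathcal{B}$, $\mathcal{A}\cap_m\mathcal{B}$ and $\mathcal{A}\triangle_m\mathcal{B}=(\mathcal{A}\cup_m\mathcal{B})\setminus_m(\mathcal{A}\cap_m\mathcal{B})$, the multiplicity of $p$ in $\mathcal{A}\triangle_m\mathcal{B}$ equals $\max\{m_{\mathcal{A}}(p),m_{\mathcal{B}}(p)\}-\min\{m_{\mathcal{A}}(p),m_{\mathcal{B}}(p)\}=|m_{\mathcal{A}}(p)-m_{\mathcal{B}}(p)|$. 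Summing over $\Pi$ gives
\[
|\mathcal{A}\triangle_m\mathcal{B}|=\sum_{p\in\Pi}|m_{\mathcal{A}}(p)-m_{\mathcal{B}}(p)|,
\]
which exhibits $|\cdot\,\triangle_m\,\cdot|$ as the $\ell^1$ distance between the multiplicity functions $m_{\mathcal{A}}$ and $m_{\mathcal{B}}$.

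With this identity in hand, the triangle inequality becomes a consequence of the triangle inequality for the absolute value applied termwise. Taking $\mathcal{A}=\calP_k(S)$, $\mathcal{B}=\calP_k(T)$ and $\mathcal{C}=\calP_k(U)$ for three multiset-labeled trees, I would write for each $p\in\Pi$ the bound $|m_{\mathcal{A}}(p)-m_{\mathcal{C}}(p)|\leq|m_{\mathcal{A}}(p)-m_{\mathcal{B}}(p)|+|m_{\mathcal{B}}(p)-m_{\mathcal{C}}(p)|$, and sum over $\Pi$ to obtain $d_{\kRF}(S,U)\leq d_{\kRF}(S,T)+d_{\kRF}(T,U)$. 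Exactly the same argument covers the rooted case, since the only change is that the elements of $\Pi$ are ordered rather than unordered pairs of multisets, which plays no role in the computation.

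I do not expect a genuine obstacle here; the one point requiring care is the bookkeeping that makes the reduction legitimate. Because two distinct edges of a single tree may induce the identical label pair (cf.\ the Remark and Figure~\ref{fig:aa}), $\calP_k(T)$ must be handled as a genuine multiset, with $m_{\calP_k(T)}(p)$ counting the edges $e$ for which $P_T(e,k)=p$, rather than as an ordinary set; the identity above is precisely what legitimizes treating $d_{\kRF}$ as an $\ell^1$ distance and thereby inheriting the triangle inequality. Finally I would note that this yields only a pseudometric and not a metric: distinct non-isomorphic trees can share the same family of edge-induced pairs, so $d_{\kRF}(S,T)=0$ need not force $S\cong T$, consistent with the $2$-RF score of $0$ exhibited in Figure~\ref{fig:similartrees}.
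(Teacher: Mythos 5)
Your proof is correct, and it reaches the same pointwise reduction as the paper but through a cleaner key lemma. The paper proves the triangle inequality by a case analysis: for each element $x$ of $\calP_k(T_1)\triangle_m\calP_k(T_2)$ it splits according to whether $x$ lies in $\calP_k(T_1)\setminus_m\calP_k(T_2)$ or the reverse, and then sub-splits on whether $x\in\mbox{Supp}(\calP_k(T_3)\setminus_m\calP_k(T_2))$ and on the relative order of $m_{\calP_k(T_3)}(x)$ and $m_{\calP_k(T_1)}(x)$, verifying in each sub-case the bound $m_{\calP_k(T_1)\triangle_m\calP_k(T_2)}(x)\leq m_{\calP_k(T_1)\triangle_m\calP_k(T_3)}(x)+m_{\calP_k(T_3)\triangle_m\calP_k(T_2)}(x)$ before summing. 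You instead isolate the identity $m_{\mathcal{A}\triangle_m\mathcal{B}}(p)=|m_{\mathcal{A}}(p)-m_{\mathcal{B}}(p)|$, which follows directly from the definitions of $\cup_m$, $\cap_m$ and $\setminus_m$, so that $|\cdot\,\triangle_m\,\cdot|$ is the $\ell^1$ distance between multiplicity functions and the pointwise bound is just the triangle inequality for the absolute value; this subsumes all of the paper's sub-cases in one line. Your version buys brevity and a conceptual explanation (the measure is an $\ell^1$ metric on multiplicity vectors, which also makes symmetry and non-negativity transparent); the paper's version buys nothing extra except that it never needs to state the closed form for $m_{\mathcal{A}\triangle_m\mathcal{B}}$. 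Your closing remarks on the need to treat $\calP_k(T)$ as a genuine multiset and on why the measure fails to be a metric match the paper's Remark and Figure~\ref{fig:similartrees}.
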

\begin{proof} The non-negativity and symmetry conditions follow from the definition of the 
$k$-RF. The triangle inequality condition is proved as follows.

Let $T_{1}$, $T_{2}$, and $T_{3}$ be three multiset-labeled trees. We need to show: 
\begin{eqnarray*}
&&d_{\kRF}(T_{1},T_{2}) \leq d_{\kRF}(T_{1},T_{3}) + d_{\kRF}(T_{3},T_{2}).
%&&d_{\kRF}(T_{1},T_{2}) \geq 0,\\
%&&d_{\kRF}(T_{1},T_{2}) = d_{\kRF (T_{2},T_{1}).
\end{eqnarray*}
Note that  $\mathcal P_{k}(X)$ 
%$\mathcal P_{k}(T_2)$ and $\mathcal P_{k}(T_3)$ 
denotes the multiset of edge-induced order pairs of sub-multisets in $X$ for $X=T_1, T_2, T_3$. 

If $x^{m(x)} \in \mathcal P_{k}(T_1) \triangle_m \mathcal P_{k}(T_2)$, we have either $x^{m(x)} \in \mathcal P_{k}(T_1) \setminus_m \mathcal P_{k}(T_2)$ or $x^{m(x)} \in \mathcal P_{k}(T_2) \setminus_m \mathcal P_{k}(T_1)$. Assume $x^{m(x)} \in \mathcal P_{k}(T_1) \setminus_m \mathcal P_{k}(T_2)$. Then,  $m_{\mathcal P_{k}(T_1)}(x) > m_{\mathcal P_{k}(T_2)}(x)$. If $x \notin \textrm{Supp}(\mathcal P_{k}(T_3) \setminus_m \mathcal P_{k}(T_2))$, we have $m_{\mathcal P_{k}(T_1)}(x) > m_{\mathcal P_{k}(T_2)}(x) \geqslant m_{\mathcal P_{k}(T_3)}(x)$.  This implies that $x \in \mbox{Supp}(\mathcal P_{k}(T_1)\setminus_m \mathcal P_{k}(T_3))$ and
$
m_{\mathcal P_{k}(T_1) \setminus_m \mathcal P_{k}(T_3)}(x) = m_{\mathcal P_{k}(T_1)}(x) - m_{\mathcal P_{k}(T_3)}(x) \geqslant m_{\mathcal P_{k}(T_1)}(x)-m_{\mathcal P_{k}(T_2)}(x) = m(x).
$
Thus, 
$
m(x)\leqslant m_{\mathcal P_{k}(T_1) \triangle_m \mathcal P_{k}(T_3)}(x) + m_{\mathcal P_{k}(T_3) \triangle_m \mathcal P_{k}(T_2)}(x).
$

On the other hand, if $x \in \textrm{Supp}(\mathcal P_{k}(T_3) \setminus_m \mathcal P_{k}(T_2))$ and $m_{\mathcal P_{k}(T_3)}(x) \geq m_{\mathcal P_{k}(T_1)}(x)$, we have:
\begin{eqnarray*}
m_{\mathcal P_{k}(T_3) \setminus_m \mathcal P_{k}(T_2)}(x) &=& m_{\mathcal P_{k}(T_3)}(x) - m_{\mathcal P_{k}(T_2)}(x) \\
&\geq&  m_{\mathcal P_{k}(T_1)}(x) - m_{\mathcal P_{k}(T_2)}(x) = m(x).
\end{eqnarray*}
If $x \in \textrm{Supp}(\mathcal P_{k}(T_3) \setminus_m \mathcal P_{k}(T_2))$ and $m_{\mathcal P_{k}(T_3)}(x) < m_{\mathcal P_{k}(T_1)}(x)$, we have $m_{\mathcal P_{k}(T_1)}(x) > m_{\mathcal P_{k}(T_3)}(x) > m_{\mathcal P_{k}(T_2)}(x)$, implying $x \in \textrm{Supp}(\mathcal P_{k}(T_1) \setminus_m \mathcal P_{k}(T_3))$. Thus, we have: 
\begin{eqnarray*}
m(x) &=& m_{\mathcal P_{k}(T_1) \setminus_m \mathcal P_{k}(T_3)}(x) + m_{\mathcal P_{k}(T_3) \setminus_m \mathcal P_{k}(T_2)}(x) \\
&\leq& m_{\mathcal P_{k}(T_1)\triangle_m \mathcal P_{k}(T_3)}(x) + m_{\mathcal P_{k}(T_3) \triangle_m \mathcal P_{k}(T_2)}(x).
\end{eqnarray*}

Lastly,  if $x^{m(x)} \in \mathcal P_{k}(T_2) \setminus_m \mathcal P_{k}(T_1)$,  we can obtain the same result.

In summary, we have:
\[
\textrm{Supp}(\mathcal P_{k}(T_1) \triangle_m \mathcal P_{k}(T_2)) \subseteq  \textrm{Supp}(\mathcal P_{k}(T_1) \triangle_m \mathcal P_{k}(T_3)) \cup \textrm{Supp}(\mathcal P_{k}(T_3) \triangle_m \mathcal P_{k}(T_2)).
\]
In addition, for each $x \in \textrm{Supp}(\mathcal P_{k}(T_1) \triangle_m \mathcal P_{k}(T_2))$, we have:
\[
m_{\mathcal P_{k}(T_1) \triangle_m \mathcal P_{k}(T_2)}(x) \leq m_{\mathcal P_{k}(T_1) \triangle_m \mathcal P_{k}(T_3)}(x) + m_{\mathcal P_{k}(T_3)\triangle_m \mathcal P_{k}(T_2)}(x).
\]
Therefore, we have:
\[
|\mathcal P_{k}(T_1) \triangle_m \mathcal P_{k}(T_2)| \leq |\mathcal P_{k}(T_1) \triangle_m \mathcal P_{k}(T_3)| + |\mathcal P_{k}(T_3) \triangle_m \mathcal P_{k}(T_2)|,
\]
that is,  the triangle  inequality holds.
%Also, the second inequality and the third equality follow from the definition of the $k$-RF. 

For multiset-labeled rooted trees, the proof is similar and hence omitted.
\end{proof} 

\begin{figure} [t!]
\centering
\includegraphics[scale=0.7]{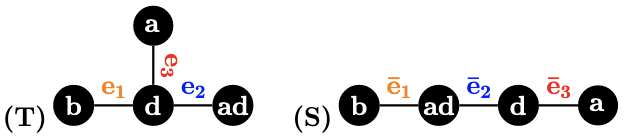}
\caption{\label{fig:similartrees}
Two distinct  multiset-labeled trees $S$ and $T$ satisfy that  ${\cal P}_{2}(S) = {\cal P}_{2}(T) =\{\{a^2d^2, b\}, \{abd,ad\}, \{a,abd^2\} \}$, showing that
2-RF score can be 0 even for distinct trees.
}
\end{figure}

\begin{remark}
For multiset-labeled trees,  $d_{\kRF}(S,T) = 0$ does not imply $S$ and $T$ are identical, as shown in Fig.~\ref{fig:similartrees}. 
%For example, for the multiset-labeled trees $S$ and $T$ in Fig.~\ref{fig:similartrees}, ${\cal P}_{2}(S) = {\cal P}_{2}(T)$, which consist of $\{a^2d^2, b\}$, $\{abd,ad\},$ and $\{a,abd^2\}$. Hence, $k$-RF is not a metric in general.
\end{remark}

\begin{proposition}
\label{prop5}
Let $k \geq 0$ and $S$ and $T$ be two (rooted) trees whose nodes are labeled by $L(S)$ and $L(T)$, respectively. Then, $d_{\kRF}(S, T)$ can be computed in $O((k+B)D(\vert V(S)\vert +\vert V(T)\vert)$ time, where
$B$ is the maximum multiplicity of a label appearing in $\{P_T(e, k) \mid e\in V(T)\}\cup \{ P_S(e, k) \mid e\in V(S)\}$ and 
$D=\vert \mathit{Supp}(L(S))\cup \mathit{Supp}(L(T))\vert$.
\end{proposition}

\begin{proof}
An algorithm for the 1-labeled case can be modified as follows for computing $k$-RF on multiset-labeled rooted and unrooted trees:
\begin{itemize}
\item Represent each label multiset as a $D$-dimensional vector, in which the integer at position $j$ is the multiplicity of the $j$-th label.
Computing all edge-induced pairs in both trees takes  $O(k (\vert E(S)\vert +\vert E(T)\vert))$ set operations.   Each set operation takes $D$ integer operations.
\item Radix-sort all the edge-induced pairs for $S$ and $T$ in $O(D(\vert E(S)\vert + B))$ and $O(D(\vert E(T)\vert + B))$ integer operations, respectively.
\item Compute the symmetric difference of the set of the edge-induced pairs in the two input trees in $\vert E(S)\vert +\vert E(T)\vert$ set operation.  Each set operation takes $D$ integer operations.
\end{itemize}
In summary, one can compute 
$d_{\kRF}(S, T)$ using $O((k+B)D(\vert V(S)\vert +\vert V(T)\vert)$
%{\small 
%\begin{eqnarray*}
%&& O(kD (\vert E(S)\vert +\vert E(T)\vert)) + O(D(\vert E(S)\vert +\vert E(T)\vert+2B) +D(\vert E(S)\vert +\vert E(T)\vert\\ &&\leq O((k+B)D(\vert V(S)\vert +\vert V(T)\vert).
%\end{eqnarray*}
%}
integer operations, as $\vert E(S)\vert =\vert V(S)\vert -1$.
\end{proof}

%\begin{remark}
%\label{rem-4}
%The parts (b) and (d) of Proposition~\ref{main-1}, with the same proof, hold for multiset-labeled (rooted) trees.
%\end{remark}

\begin{figure}[t!]
\centering
\includegraphics[scale=0.30]{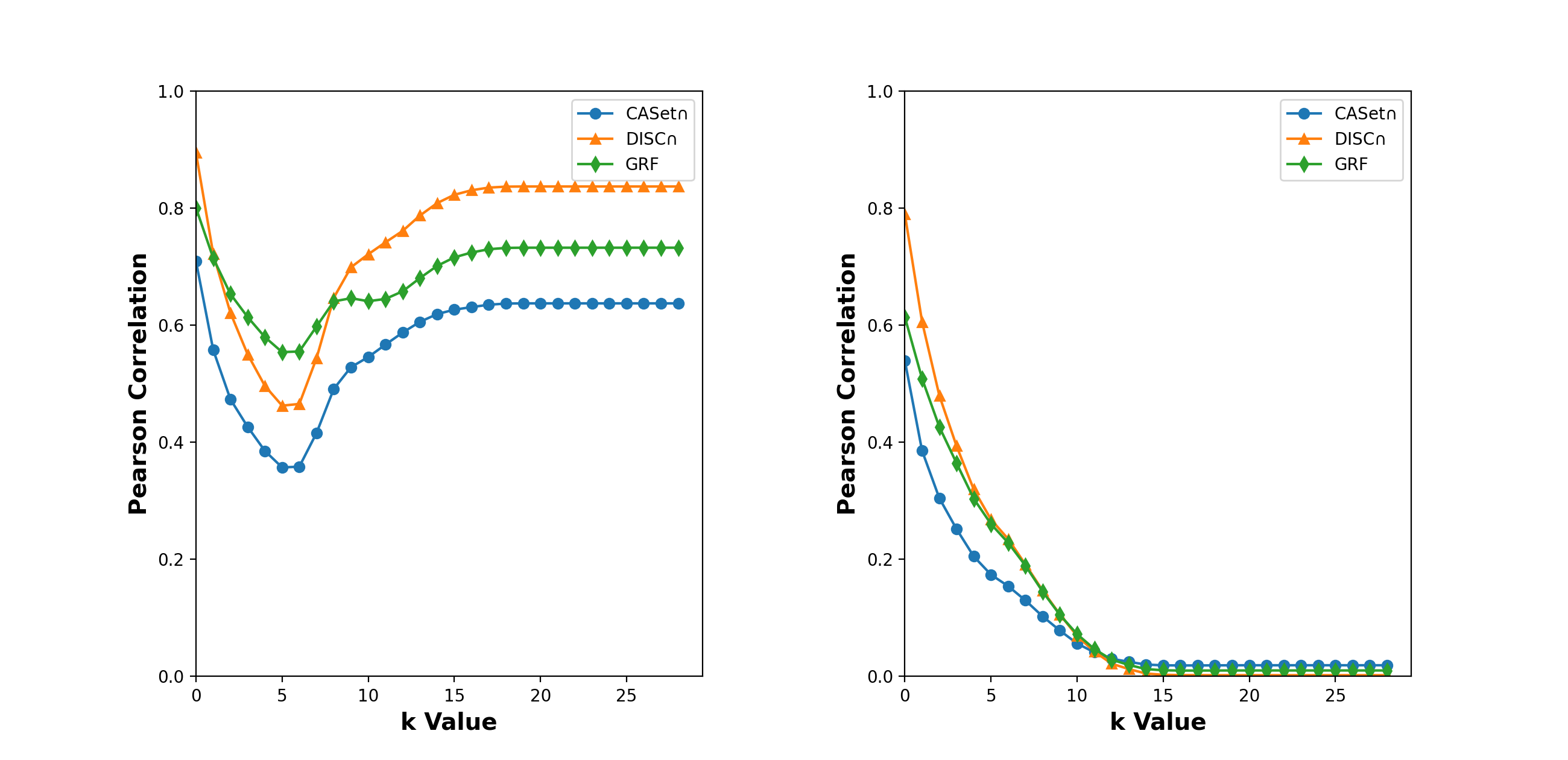}
\caption{\label{fig7:correlation}
Pearson correlation of the $k$-Robinson-Foulds (RF) with CASet$\cap$, DISC$\cap$, and GRF. The analyses were conducted on rand rooted trees with the same label set (left) and with different but overlapping label sets (right) that were reported in \citet{1John_Zhang2019}. 
The Pearson correlation became constant for $k \geq 19$ in the range $k$-RF becomes RF.
CASet$\cap$: Common Ancestor Set distance;
DISC$\cap$: Distinctly Inherited Set Comparison distance~\cite{4Dinardo}. 
GRF:  Generalized RF distance~\cite{llabres.ea:2021}.
}
\end{figure}
%Figure 10

\subsection{Correlation of the \texorpdfstring{$k$-RF}{k-RF} and the Other Measures}
\label{subsec 5.6}

%\subsection{Experimental results}

%Using the Pearson correlation, we compared the $k$-RF measures with the three existing measures CASet$\cap$, DISC$\cap$~\citep{4Dinardo}, and the generalized RF distance (GRF)~\citep{llabres.ea:2020} in the space of set-labeled trees for different $k$ from 0 to 28.

Let $T$ and $S$ be two 1-labeled rooted trees with the same label set $X$.  Again, we identify the nodes with their labels in the two trees. For any two subset $X'$ and $X''$ of $X$, we use $d_{J}(X', X'')$ to denote their Jaccard distance. 
The CASet$\cap$ distance between  $T$ and $S$ is defined to be
the average 
$d_{J}(A_T(i)\cap A_T(j), A_S(i)\cap A_S(j))$
of a pair of nodes $i$ and $j$, whereas the DISC$\cap$ distance between $T$ and $S$ is the average  
$d_J(A_T(i)\setminus A_T(j), A_S(i)\setminus A_S(j))$ of an order pair $(i, j)$ of nodes \cite{4Dinardo}.

Using the Pearson correlation, we compared the $k$-RF with CASet$\cap$, DISC$\cap$, and GRF~\citep{llabres.ea:2020} in the space of set-labeled trees for different $k$ from 0 to 28.

Firstly,  we conducted the correlation analysis in the space of mutation trees with the same label set. Using a method reported by~\citet{1John_Zhang2019}, we generated a simulated dataset containing 5,000 rooted trees in which the root was labeled with 0 and the other nodes were labeled by the disjoint subsets of $\{1,2,\ldots,29\}$, where the trees might have different number of nodes.  Using all $\binom{5,000}{2}$ pairwise scores for CASet$\cap$, DISC$\cap$, GRF and $k$-RF, we conducted the Pearson correlation analysis of $k$-RF with the other three (left panel,  Fig.~\ref{fig7:correlation}).

Our results show that CASet$\cap$, DISC$\cap$ and GRF were all positively correlated with $k$-RF. We observed the following facts:
\begin{itemize} 
\item The GRF and $k$-RF had the largest Pearson correlation for each $k < 8$, whereas the DISC$\cap$ and $k$-RF  had the largest Pearson correlation for each $k \geq 8$. 
\item The 5-RF and 6-RF were less correlated to CASet$\cap$, DISC$\cap$ and GRF than other $k$-RF. 
\item The Pearson correlation between $k$-RF and  CASet$\cap$ (respectively, DISC$\cap$) increased when $k$ went from 6 to 15.
%\item The Pearson correlation did not change when $k \geq 19$. In fact, $k$-RF became the RF distance when $k \geq 19$, as the maximum diameter of the trees in the dataset was 20.
\end{itemize}
Secondly,  we  conducted the Pearson correlation analysis on the trees with different but overlapping label sets. The dataset was generated by the same method and was a union of 5 groups of rooted trees, each of which contained 200 trees over the same label set. We computed the dissimilarity scores for each tree in the first family and each tree in other groups and then computed the Pearson correlation between different measures. Again, all the dissimilarity measures were positively correlated, but less correlated than in the first case; see Fig~\ref{fig7:correlation} (right). 
This observation could be the result of the fact that difference in label sets of two trees makes  their $k$-RF score at least $k+1$. However, the difference does not strongly contribute to the other distances because DISC$\cap$ and CASet$\cap$ consider the intersection of label sets (see ~\citep{4Dinardo}), and GRF considers the intersection of clusters.

The right dotplot of Fig.~\ref{fig7:correlation} shows that the $k$-RF and DISC$\cap$ had the largest Pearson correlation for $k$ from 1 to 9, and the $k$-RF and the CASet$\cap$ had the largest Pearson correlation for $k \geq 10$. Moreover, all the Pearson correlations decreased when $k$ changed from 1 to 15. This trend was not observed in the first case.
This decreasing trend could be the result of the fact that difference in label sets contributes to $k$-RF more as $k$ increases.
%(see Remark~\ref{rem-4}).

\section{Clustering Trees with the 
\texorpdfstring{$k$-RF}{k-RF}}
\label{sec_5}

A test was designed to demonstrate which of the $k$-RF, CASet$\cap$, DISC$\cap$, and GRF is good at clustering labeled trees.

We generated randomly 5 tree families each containing 50 trees using the program reported by~\citet{1John_Zhang2019}. The nodes were labeled by the subsets of a 30-label set in the trees of each family. The label sets used in different tree families were different, but overlapping. As the nodes were labeled by disjoint subsets, each different label between the label sets of two trees induces at least $d$ different pairs, where $d$ is the degree of the node with the label. Thus, a large number of different elements between the label sets could make the trees more distinguishable by the $k$-RF. Therefore, the label sets used for the different tree families differed in only one label.

We computed the pairwise dissimilarity scores for all 250 trees in the five groups using each measure; we then clustered the 250 trees into $c$ clusters using the $K$-means algorithm, where $c$ ranges from $2$ to $57$. The clustering results were assessed  using the Silhouette score~\citep{kaufman2009finding}.

\begin{figure} %[h]
\centering
\includegraphics[width=0.9\textwidth]{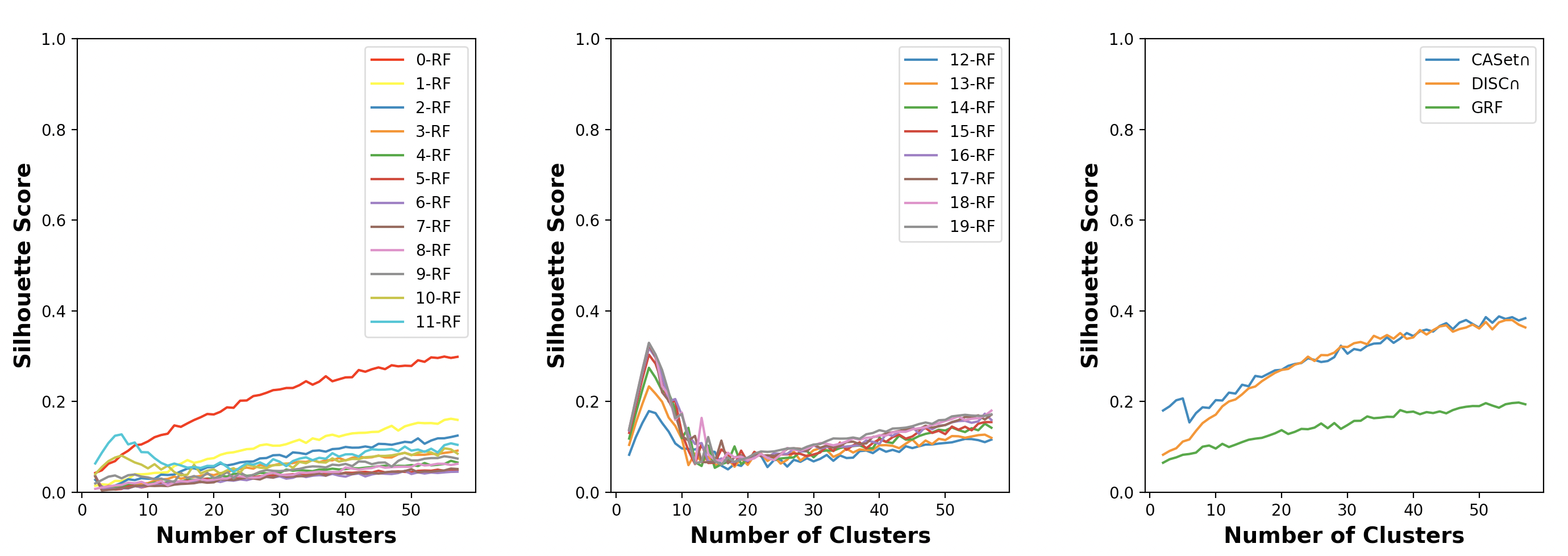}
\caption{\label{sil}Silhouette scores of clustering 250 rooted trees with $k$-RF for $0\leq k\leq 11$ (left) and  $12\leq k\leq 19$ (middle) and with CASet$\cap$, DISC$\cap$, and GRF (right).
RF: Robinson-Foulds.
CASet$\cap$: Common Ancestor Set distance;
DISC$\cap$: Distinctly Inherited Set Comparison distance~\cite{4Dinardo}. 
GRF:  Generalized RF distance~\cite{llabres.ea:2021}.}
\end{figure}

As Fig.~\ref{sil} illustrates, neither of the CASet$\cap$, DISC$\cap$, and GRF distances were able to recognize the exact number of families. However, CASet$\cap$ had the highest Silhouette score when the number of clusters was 5, compared to DISC$\cap$, GRF, and the $k$-RF for $k\leq 12$. In addition, the figure shows that the $k$-RF  could recognize the correct number of families when $k$ ranges from 12 to 19. Moreover, the Silhouette score of the $k$-RF  increased when $k$ increased from $8$ to $19$. 
This interesting observation 
may stem from the fact that as $k$ increases, the number of pairs of trees achieving the highest possible $k$-RF score also increases, thereby enhancing the recognizability of families. It's worth noting that such pairs are guaranteed to exist when $k$ is larger than the minimum diameter of the trees, which is 8 in our case.

\section{Conclusions}
\label{sec 7}

The development of an efficient and robust measure for the comparison of labeled trees is important. In this paper, we have proposed a novel variant of dissimilarity metrics, namely the $k$-RF, tailored for labeled trees. The $k$-RF facilitates the analysis of local structures in labeled trees, accommodating nodes labeled with (not necessarily the same) multisets. Significantly, these metrics find practical applicability in mutation trees used in cancer research.

The RF distance is succinctly expressed as 
$(n-1)$-RF within the space of labeled
trees with $n$ nodes.  By setting $k$ to a value smaller than $n-1$,  the $k$-RF metric can capture analogous local regions in two labeled trees. Notably, for every $k$,  the $k$-RF is a pseudometric for multiset-labeled trees and becomes a metric in the space of 1-labeled trees. However,  
the distribution of pairwise $k$-RF scores
in the space of 1-labeled unrooted (or rooted) trees conforms to a Poisson distribution
specifically for $k=n-2$, and unlikely have the same trend for other values of $k\geq 1$.

%Then, we examined the frequency distribution of the measures and observed that in the space of 1-labeled unrooted and rooted trees with $n$ ($\geq 3$) nodes. 
%0-RF and $(n-2)$-RF scores seem to have a Poisson distribution while the distribution of  $k$-RF does not appear to be Poisson for $k$ other than $0$ and $n-2$.

%After evaluating the measures for 1-labeled trees, we examined the trees whose nodes are labeled by multisets. For these trees, each $k$-RF  is a pseudometric. Then, using mutation trees with the same and different label sets, we evaluated correlations of CASet$\cap$, DISC$\cap$, and GRF with each $k$-RF  for $0 \leq k \leq 28$. We observed that each $k$-RF is positively correlated with each of the other three measures. Moreover, The correlation values were generally higher when the measures were used to compare mutation trees with the same label set. Finally, we demonstrated that some $k$-RF  are better for clustering mutation trees according to their label sets than the CASet$\cap$, DISC$\cap$, and GRF; additionally, we found that for mutation trees with a minimum diameter of $n$, the $k$-RF  with $k > n$ are better able to cluster the trees, compared to the ones with $k \leq n$.

We verified  the $k$-RF measures through a comprehensive comparison with CASet, DISC 
(\cite{4Dinardo}) and GRF (\cite{llabres.ea:2021}) on randomly labeled trees generated by a house-made program (\cite{1John_Zhang2019}). Our findings revealed a consistent positive correlation between $k$-RF and each of the other three measures for every value of $k$. Notably, the correlation values exhibited a tendency to be higher when the measures were applied to assess mutation trees with identical label sets. Furthermore, our study underscored the superior clustering capabilities of $k$-RF compared to the three mentioned measures.

We would like to emphasize that selecting an appropriate $k$-RF in practical applications lacks a universal rule of thumb, primarily due to a shortage of experience in this domain. Perhaps a judicious approach involves choosing a suitable $k$-RF by carefully considering the topological similarity among the trees under consideration.

Future work includes how to apply the $k$-RF  to designing tree inference algorithms like GraPhyC~\citep{9Govek} and also how to infer the exact frequency distribution of the $k$-RF for each $k \geq 1$. It is also interesting to investigate the generalization of RF-distance for clonal trees~\citep{llabres.ea:2020}.
%\section*{Supplementary Information}

The computer program for the $k$-RF can be downloaded from \url{https://github.com/Elahe-khayatian/k-RF-measures.git}.

\section*{Acknowledgments}
\addcontentsline{toc}{section}{Acknowledgments}

The authors would like to thank the anonymous reviewer for providing helpful suggestions 
and comments to our first submission of the work.
This research was partially supported by the the Ministerio de Ciencia e Innovaci\'{o}n (MCI), 
the Agencia Estatal de Investigaci\'{o}n (AEI) and the European Regional Development Funds (ERDF) 
through project METACIRCLE PID2021-126114NB-C44, also supported by the European Regional Development Fund (FEDER), 
by the Agency for Management of University and Research Grants (AGAUR) through grant 2017-SGR-786 (ALBCOM), 
and by Singapore MOE Tier 1 grant R-146-000-318-114.

%\bibliographystyle{apa}
%\bibliography{main}

\end{document}